\newcommand{\ourtitle}{Risk Dynamics in Trade Networks}
\newcommand{\raf}[1]{\begin{arfnote}{}#1\end{arfnote}}
\newcommand{\mdr}[1]{\begin{amrnote}{}#1\end{amrnote}}
\let\oldrfnote\rfnote
\renewcommand{\rfnote}[1]{\oldrfnote{\scriptsize #1}}
\renewcommand{\rfnote}[1]{} \renewcommand{\raf}[1]{} \renewcommand{\mdr}[1]{}
\newcommand*\Let[2]{\State #1 $\gets$ #2}
\newcommand{\R}{\mathbb{R}}
\let\reals\R
\newcommand{\ones}{\mathds{1}}
\newcommand{\infc}{\wedge}
\newcommand{\tr}{\top}
\newcommand{\outcomes}{\Omega}
\newcommand{\dists}{\Delta}
\newcommand{\risk}{\rho}
\newcommand{\cash}{\ensuremath{r^{\raisebox{0pt}{$_\$$}}}}
\newcommand{\price}{\textrm{price}}
\newcommand{\positions}{\Xc}
\newcommand{\pos}{\mathcal{R}}
\newcommand{\net}{\Gc}
\newcommand{\nodes}{V}
\newcommand{\edges}{E}
\newcommand{\oo}{\omega}
\newcommand{\im}{\mathrm{im}}
\newcommand{\conv}{\mathrm{conv}}
\newcommand{\relint}{\mathrm{relint}}
\DeclareMathOperator*{\argmin}{\arg\min}
\DeclareMathOperator*{\argmax}{\arg\max}
\newcommand{\inner}[1]{\left\langle#1\right\rangle}
\newcommand{\E}[2]{\mathbb{E}_{#1}\left[#2\right]}
\newcommand{\Cc}{\mathcal{C}}
\newcommand{\Gc}{\mathcal{G}}
\newcommand{\Lc}{\mathcal{L}}
\newcommand{\Rc}{\mathcal{R}}
\newcommand{\Sc}{\mathcal{S}}
\newcommand{\Xc}{\mathcal{X}}
\newcommand{\eg}{\textit{e.g.}}
\newcommand{\ie}{\textit{i.e.}}
\newcommand{\aip}{{A_i}^{\!\!+}}
\newtheorem{definition}{Definition}
\newtheorem{theorem}{Theorem}
\newtheorem{lemma}{Lemma}
\begin{document}
\title{\ourtitle}
%\author{Paper \# 2260}
\author{Rafael M. Frongillo\\Harvard University\\\url{raf@cs.berkeley.edu} \And Mark D. Reid\\The Australian National University \& NICTA \\\url{mark.reid@anu.edu.au}}
\maketitle
\begin{abstract}
  We introduce a new framework to model interactions among agents which seek to trade to minimize their risk with respect to some future outcome.  We quantify this risk using the concept of risk measures from finance, and introduce a class of trade dynamics which allow agents to trade contracts contingent upon the future outcome.  We then show that these trade dynamics exactly correspond to a variant of randomized coordinate descent.  By extending the analysis of these coordinate descent methods to account for our more organic setting, we are able to show convergence rates for very general trade dynamics, showing that the market or network converges to a unique steady state.  Applying these results to prediction markets, we expand on recent results by adding convergence rates and general aggregation properties.  Finally, we illustrate the generality of our framework by applying it to agent interactions on a scale-free network.
\end{abstract}

\section{Introduction} % (fold)

The study of dynamic interactions between agents who each have a different stake in the world is of broad interest, especially in areas such as multiagent systems, decision theory, and economics.  In this paper, we present a new way to model such dynamic interactions, based on the notion of risk measures from the finance literature.

The agents in our model will each hold a \emph{position}, which states how much the agent stands to gain or lose for each possible outcome of the world.  The overall outlook of an agent's position will be quantified by their \emph{risk measure}, which simply captures the ``riskiness'' of their position.  To minimize their risks, agents change their positions by trading \emph{contingent contracts} amongst themselves; these contracts state that the owner is entitled to some amount of money which depends on this future outcome.  Traders can be thought of as residing on a \emph{network}, the edges or hyperedges of which dictate which agents can trade directly.

This simple setting gives rise to several natural questions, which we would like to understand:  Given a set of agents with initial positions, can a stable equilibrium be found, where no agents can trade further for mutual benefit?  If such an equilibrium exists, can the agents converge to it using a trading protocol, and if so what is the rate of convergence?  How does the structure of the underlying network affect change these answers?  What is the distribution of the agents' risks at equilibrium, and how does an agent's final risk depend on his position in the network?  This paper addresses and provides answers to many of these questions.

Our model is heavily inspired by the work of~\cite{Hu:2014}, who use risk-measure agents to draw connections between machine learning and prediction markets.  Another motivation comes from~\cite{Abernethy:2014}, who study a prediction market setting with risk-averse traders whose beliefs over the outcomes are members of an exponential family of distributions.  Both papers analyze the steady-state equilibrium of the market, leaving open the question of how, and how fast, the market may arrive at that equilibrium.  In fact, both papers specifically point to rates and conditions for convergence in their future work.

The contributions of this paper are threefold.  First, we develop a natural framework to model the interactions of networked agents with outcome-contingent utilities, which is tractable enough to answer many of the questions posed above.  Second, by showing that our trading dynamics can be recast as a randomized coordinate descent algorithm, we establish convergence rates for trading networks and/or agent models which are more general than the two prediction market papers above.  Third, along the way to showing our rates, we adapt and generalize existing coordinate descent algorithms from the optimization literature, \eg~\cite{nesterov2012efficiency} and \cite{richtarik2014iteration}, which may be of independent interest.

\raf{Extra bibliographic notes:

  Risk measures introduced by~\cite{artzner1999coherent}

  Risk measures first connected to prediction markets by (?) P. Carr, H. Geman, and D. Madan. Pricing and hedging in incomplete markets. Journal
of Financial Economics, 62(1):131–167, 2001 -- check~\cite{othman2011liquidity-sensitive}.

  Certainty equivalents are not always convex or concave~\cite{meucci2009risk}.}

% Contributions:
% - Model: for any situation with networked agents whose utilities are contingent on some future outcome, this is a natural and tractable framework to analyze their interactions
% - Optimization: generalize results, toward a unification
% - Prediction mkts: amos, sebastien, deep mkts etc

%\mdr{Do we do this? TO MENTION: relation to utility-of-money risk aversion; connections between risk measures and pred mkts goes back to Othman}

\section{Setting}
\label{sec:setting}

Let $\outcomes$ be a finite set of possible outcomes.  Following \citep{Follmer:2004}, a \emph{position} is simply a function from outcomes to the reals, $X : \outcomes \to \R$.  Positions can be thought of as random variables which are intended to represent outcome-contingent 
monetary values.
 Denote by $\ones : \outcomes\to\reals$ the constant position with $\ones:\oo\mapsto 1$.
The set of all positions under consideration will be denoted $\positions$
and will be assumed to be closed under linear combination and contain at
least all the outcome-independent positions $\{\alpha\ones : \alpha\in\reals\}$.
We will denote by $\dists$ the set of probability distributions over $\outcomes$, namely $\dists = \{ p \in {[0,1]}^\outcomes \colon \inner{p, \ones} = 1 \}$, where $\inner{p,x} = \sum_{\oo\in\outcomes} p(\oo) x(\oo)$ is the inner product.
Note that $\inner{p, X} = \E{\oo\sim p}{X(\oo)}$, the mean under $p$.

When viewed as a vector space in $\reals^\outcomes$, the set of positions $\positions$ introduced above is a subspace of dimension $k\leq |\outcomes|$.  Hence, it must have a basis of size $k$, and thus we must have some $\phi:\outcomes\to\reals^k$ with the property that for all $X\in\positions$, there is some $r\in\reals^k$ such that $X(\oo) = r\cdot\phi(\oo) = \sum_i r_i\phi(\oo)_i$ for all $\oo\in\outcomes$.

We will make extensive use of this ``compressed'' form of $\positions$, which we denote by $\pos=\reals^k$.  Define the counterpart $X[r]\in\positions$ of $r\in\pos$ to be the position $X[r]:\oo\mapsto r\cdot\phi(\oo)$.  The presence of outcome-independent positions then translates into the existence of some $\cash \in\pos$ satisfying $X[\cash] = \ones$.  Finally, we denote by $\Pi := \conv(\phi(\outcomes))$ the convex hull of the basis function $\phi$.

As intuition about $\phi$ and $\Pi$, it is helpful to draw analogy to the setting of prediction markets.  As we will see in Section~\ref{sec:risknets-appl-pred-mark}, the function $\phi$ can be thought of as encoding the payoffs of each of $k$ outcome-contingent contracts, or \emph{securities}, where contract $i$ pays $\phi(\oo)_i$ for outcome $\oo$.  The space $\Pi$ then becomes the set of possible beliefs $\{\inner{p,\phi} : p\in\dists\}$ of the expected value of the securities.  

%~~~~~~~~~~~~~~~~~~~~~~~~~~~~~~~~~~~~~~~~~~~~~~~~~~~~~~~~~~~~~~~~~~~~~~~~~~~~~~
\subsection{Risk Measures} % (fold)

Following \citet{Hu:2014}, agents in our framework will each quantify their 
uncertainty in positions via a (convex monetary) \emph{risk measure} 
$\risk : \pos \to \R$ satisfying, for all $r, r' \in \positions$:
\begin{itemize}
	\item \emph{Monotonicity}: 
		$\forall\oo$ $X[r](\oo) \le X[r'](\oo)$ $\Rightarrow$
		$\risk(r) \ge \risk(r')$. 
	\item \emph{Cash invariance}: 
		$\risk(r + c\cdot\cash) = \risk(r) - c$ 
		for all $c \in \reals$.
	\item \emph{Convexity}: 
		$\risk(\lambda r + (1-\lambda) r') 
			\le \lambda \risk(r) + (1-\lambda) \risk(r')$\\
		% $\risk(\lambda r \!+\! (1\!-\!\lambda) r') 
		% 	\le \lambda \risk(r) \!+\! (1\!-\!\lambda) \risk(r')$\\
		for all $\lambda \in [0,1]$.
	\item \emph{Normalization}: $\risk(0) = 0$.
\end{itemize}
The reasonableness of these properties is usually argued as follows 
(see, \eg, \citep{Follmer:2004}). Monotonicity ensures that positions that
result in strictly smaller payoffs regardless of the outcome are considered
more risky. Cash invariance captures the idea that if a guaranteed payment
of $\$c$ is added to the payment on each outcome then the risk will decrease
by $\$c$. Convexity states that merging positions results in lower risk.   Finally, normalization is for convenience, stating that a
position with no payout should carry no risk.

In addition to these common assumptions, we will make two regularity assumptions:
\begin{itemize}
\item \emph{Expressiveness}: $\risk$ is everywhere differentiable, and $\mathrm{closure} \{\nabla \risk(r):r\in\pos\} = \Pi$.
\item \emph{Strict risk aversion}: the convexity inequality above is strict unless $r-r'= \lambda\cash$ for some $\lambda\in\reals$.
\end{itemize}
Expressiveness is related to the dual formulation given below; roughly, it says that the agent must take into account every possible distribution over outcomes when calculating the risk.  Strict risk aversion says that an agent should strictly prefer a mixture of positions, unless of course the difference is outcome-independent.

A key result concerning convex risk measures is the following 
representation theorem (cf. \citet[Theorem 4.15]{Follmer:2004}, \citet[Theorem 3.2]{abernethy2013efficient}).

\begin{theorem}[Convex Risk Representation]
  \label{thm:cvx-risk-dual}
	A functional $\rho : \pos \to \R$ is a convex risk measure if and 
	only if there is a closed convex function $\alpha : \Pi \to \R\cup\{\infty\}$ such that
	\begin{equation}\label{eq:convex-risk}
		\rho(r) = \sup_{\pi\in\relint(\Pi)} \inner{\pi,-r} - \alpha(\pi).
	\end{equation}
\end{theorem}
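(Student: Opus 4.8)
The plan is to read the statement as the Fenchel--Moreau biconjugation theorem in disguise: a finite convex function equals the conjugate of its conjugate, and the three structural axioms together with the two regularity assumptions serve only to identify the conjugate's effective domain with exactly $\relint(\prices)$ (and to normalize the penalty $\alpha$). We would treat the two implications separately: the ``if'' direction is a direct verification of the axioms, while the ``only if'' direction carries the work, and within it the delicate point is narrowing the supremum from $\prices$ down to $\relint(\prices)$.

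For the ``if'' direction, suppose $\risk(r)=\sup_{\pi\in\relint(\prices)}\inner{\pi,-r}-\alpha(\pi)$ for a closed convex $\alpha$, taken normalized so that $\inf_\pi\alpha(\pi)=0$ (the canonical choice produced by the converse; without it one loses only the $\risk(0)=0$ axiom). Convexity of $\risk$ is immediate, being a pointwise supremum of functions affine in $r$. For monotonicity and cash invariance, note every $\pi\in\prices=\conv(\phi(\outcomes))$ equals $\inner{p,\phi}$ for some $p\in\dists$, so $\inner{\pi,r}=\sum_{\oo}p(\oo)\,r\cdot\phi(\oo)=\inner{p,X[r]}$ and $\inner{\pi,\cash}=\inner{p,X[\cash]}=\inner{p,\ones}=1$; the first identity makes the pointwise order on positions respected term by term in the supremum, and the second makes every term drop by exactly $c$ when $r$ is replaced by $r+c\cash$. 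Finally $\risk(0)=\sup_\pi(-\alpha(\pi))=-\inf_\pi\alpha(\pi)=0$.

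Conversely, suppose $\risk$ is a convex risk measure. Being finite and convex on all of $\pos=\reals^k$ it is continuous, hence a closed proper convex function, so Fenchel--Moreau applies to the sign-flipped conjugate $\alpha(\pi):=\sup_{r\in\pos}\inner{\pi,-r}-\risk(r)$, which is automatically closed, convex, and proper: $\risk(r)=\sup_{\pi\in\pos}\inner{\pi,-r}-\alpha(\pi)$. Cash invariance immediately restricts $\dom\alpha$ into the hyperplane $\{\pi:\inner{\pi,\cash}=1\}$ containing $\prices$, since replacing $r$ by $r+c\cash$ changes the bracket defining $\alpha(\pi)$ by $c\,(1-\inner{\pi,\cash})$. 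The full containment $\dom\alpha\subseteq\prices$ follows by separation: if $\pi\notin\prices$, pick $v$ with $\inner{\pi,v}>\max_{\oo}\inner{\phi(\oo),v}$, set $r=-v$, and use monotonicity with cash invariance and normalization to bound $\risk(tr)\le -t\min_{\oo}X[r](\oo)=t\max_{\oo}\inner{\phi(\oo),v}$ for all $t>0$ (the position $X[tr]$ dominates the constant position $c\ones=X[c\cash]$ pointwise, with $c=t\min_{\oo}X[r](\oo)$, so $\risk(tr)\le\risk(c\cash)=-c$); then $\inner{\pi,-tr}-\risk(tr)\ge t\,(\inner{\pi,v}-\max_{\oo}\inner{\phi(\oo),v})\to+\infty$ as $t\to\infty$, so $\alpha(\pi)=+\infty$. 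Normalization gives $\risk(0)=-\inf_\pi\alpha(\pi)=0$.

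The remaining, most delicate step is to replace $\sup_{\pi\in\prices}$ by $\sup_{\pi\in\relint(\prices)}$, and this is where \emph{expressiveness} is used. By the standard identification of the maximizer attaining the supremum defining $\risk(r)$ with the subgradient of $\risk$ at $r$ --- a singleton, by differentiability --- the condition that these maximizers are dense in $\prices$ (i.e., expressiveness, in the appropriate sign convention) is equivalent to $\overline{\dom\alpha}=\prices$; hence $\relint(\dom\alpha)=\relint(\prices)\subseteq\dom\alpha$, so $\alpha$ is finite and continuous on $\relint(\prices)$. Finally, for any relative-boundary point $\pi_0\in\dom\alpha$, fix $\pi_1\in\relint(\prices)$ and move along $\pi_t=(1-t)\pi_0+t\pi_1\in\relint(\prices)$: convexity gives $\alpha(\pi_t)\le(1-t)\alpha(\pi_0)+t\alpha(\pi_1)$, so $\inner{\pi_t,-r}-\alpha(\pi_t)\to\inner{\pi_0,-r}-\alpha(\pi_0)$ as $t\to0^+$, showing the supremum over $\relint(\prices)$ already dominates the term at $\pi_0$; thus it equals $\risk(r)$. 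The conjugate-duality bookkeeping is routine; carefully juggling $\dom\alpha$, its relative interior, and the lower semicontinuity of $\alpha$ so the supremum lands on exactly $\relint(\prices)$ is the only genuinely fiddly part, and \emph{expressiveness} is precisely the hypothesis that makes it go through.
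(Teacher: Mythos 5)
Your proof is correct and follows essentially the same route as the paper, which does not prove this theorem itself but cites it (Föllmer--Schied Thm.~4.15, Abernethy et al.\ Thm.~3.2): Fenchel--Moreau biconjugation of the finite (hence continuous, closed) convex function $\risk$, with cash invariance and monotonicity/separation pinning $\dom\alpha$ inside $\Pi$, and the expressiveness assumption plus continuity of a closed convex function along segments into $\relint(\dom\alpha)$ justifying the restriction of the supremum to $\relint(\Pi)$. You also correctly flag the two spots where the paper's statement is slightly informal --- the normalization $\inf_\pi\alpha(\pi)=0$ needed to recover $\risk(0)=0$ in the ``if'' direction, and the sign convention in the expressiveness condition --- and patch both appropriately.
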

Here $\relint(\Pi)$ denotes the relative interior of $\Pi$, the interior relative to the affine hull of $\Pi$.
Notice that if $f^*$ denotes the convex conjugate
$f^*(y) := \sup_{x} \inner{y,x} - f(x)$, then this theorem states that
$\risk(r) = \alpha^*(-r)$.
This result suggests that the function $\alpha$ can be interpreted as a \emph{penalty function}, assigning
a measure of ``unlikeliness'' $\alpha(\pi)$ to each expected value $\pi$ of the securities defined above.  Equivalently, $\alpha(\inner{p,\phi})$ measures the unlikeliness of distribution $p$ over the outcomes.
We can then see that the risk is the greatest expected loss under each distribution, taking into account the penalties assigned by $\alpha$.

%%%
%\rfnote{Added for arXiv version}
\raf{WAS HERE: examples}
% A commonly studied convex risk measure defined on full positions $X\in\positions$ is the \emph{entropic risk} 
%  $\risk^\beta(X) := \beta \log \E{\oo\sim q}{e^{- X(\oo)/\beta}}$. Its associated 
%  convex function $\alpha^\beta$ is the scaled relative entropy 
%  $\alpha^\beta(p) = \beta\, \mathrm{KL}(p\,|\,q)$.  We can use this to define risk measures on $r\in\pos$, by taking $\risk^\beta(r) = \risk^\beta(X[r]) = \beta \log \E{\oo\sim q}{e^{(-r)\cdot\phi(\oo)/\beta}}$

\raf{Add a comment or two about the relationship to von Neumann - Morgenstern utility theory and risk aversion (forward ref to PM section)}
%%%

% (end) ~~~~~~~~~~~~~~~~~~~~~~~~~~~~~~~~~~~~~~~~~~~~~~~~~~~~~~~~~~~~~~~~~~~~~~~

%~~~~~~~~~~~~~~~~~~~~~~~~~~~~~~~~~~~~~~~~~~~~~~~~~~~~~~~~~~~~~~~~~~~~~~~~~~~~~~
\subsection{Risk-Based Agents} % (fold)

We are interested in the interaction between two or more agents who express
their preferences for positions via risk measures.
\citet{Burgert:2006} formalise this problem by considering $N$ agents with
risk measures $\risk_i$ for $i = 1, \ldots, N$ and asking how to split a
position $r \in \pos$ in to per-agent positions $r_i$ satisfying 
$\sum_i r_i = r$ so as to minimise the total risk $\sum_i \risk_i(r_i)$. 
They note that the value of the total risk is given by the 
\emph{infimal convolution} $\infc_i \risk_i$ of the individual agent risks
--- that is, 
\begin{equation}
  \label{eq:risknets-inf-conv}
  (\infc_i \risk_i)(r) := \inf \left\{ \sum_i \risk_i(r_i) : \sum_i r_i = r \right\}.    
\end{equation}

A key property of the infimal convolution, which will underly much of our analysis, is that its convex conjugate is the sum of the conjugates of its constituent functions.  See \eg~\cite{Rockafellar:1997} for a proof.

\begin{equation}
  \label{eq:risknets-inf-conv-dual}
  (\infc_i \risk_i)^* = \sum_i \risk_i^*~.
\end{equation}

%%%
\rfnote{Added for arXiv version}
As a brief aside, we note that one could think of $\infc_i \risk_i$ as the ``market risk'', which captures the risk of the entire market as if it were one entity.  By definition, eq.~\eqref{eq:risknets-inf-conv} says that the market is trying to reallocate the risk so as to minimize the net risk.  This interpretation is confirmed by eq.~\eqref{eq:risknets-inf-conv-dual} when we interpret the duals as penalty functions as above: the penalty of $\pi$ is the sum of the penalties of the market participants.  This collective view is useful when thinking about hierarchical markets, as we will discuss briefly in Section~\ref{sec:risknets-appl-pred-mark}.
%%%

\citet{Hu:2014} identify a special, market making agent with risk $\risk_0$
that aims to keep its risk constant rather than minimising it.
The risk minimising agents trade with the market maker by paying the market
maker $\risk_0(-r)$ dollars in exchange for receiving position $r$, thus keeping the market maker's risk constant.  We will revisit these special constant-risk interactions in Section~\ref{sec:risknets-appl-pred-mark}.  For now, we will consider quite general trading dynamics.

% (end) ~~~~~~~~~~~~~~~~~~~~~~~~~~~~~~~~~~~~~~~~~~~~~~~~~~~~~~~~~~~~~~~~~~~~~~~

\section{Trade Dynamics}

We now describe how agents may interact with one another, by introducing certain dynamics of trading among agents.  Recall that we have $N$ agents, and each agent $i$ is endowed with a risk measure $\risk_i$.  We further endow agent $i$ with an initial position $r_i^0\in\pos$, and let $r^0 = \sum_i r_i^0$.  We will start time at $t=0$ and denote the position of trader $i$ at time $t$ by $r^t_i$.

A crucial concept throughout the paper is that of surplus.  Given a subset of the agents willing to trade among themselves, we can quantify the total net drop in risk that group can achieve.
\begin{definition}
  \label{def:surplus}
  Given $r_S = \{r_i\}_{i\in S}$ for some subset of agents $S$, the \emph{$S$-surplus} of $r$ is the function $\Phi:\pos^{|S|}\to\reals$ defined by
$\Phi_S(r_S) = \sum_{i\in S} \risk_i(r_i) - (\wedge_i \risk_i)(\sum_{i\in S} r_i)$.  In particular, $\Phi(r) := \Phi_{[N]}(r)$ is the \emph{surplus function}.
\end{definition}

We now define trade functions, which are efficient in the sense that all of this surplus is divided, perhaps unevenly, among the agents present.  A trade dynamic will then be simply a distribution over trade functions.

\begin{definition}
  \label{def:risknets-trade-func}
  Given some subset of nodes $S\subseteq [N]$, we say a function
  $f:\pos^N\to\pos^N$ is a \emph{trade function on $S$} if
  \begin{enumerate}
  \item $\sum_{i\in S} f(r)_i = \sum_{i\in S} r_i$,
  \item the $S$-surplus is allocated, meaning $\Phi_S(f(r)_S)=0$,
  \item for all $j\notin S$ we have $f(r)_j = r_j$.
  \end{enumerate}
  % (a) $\sum_{i\in S} f(r)_i = \sum_{i\in S} r_i$, (b) \rfnote{(b) was:
  %   for all $i,j\in S$, $\nabla \risk_i(f(r)_i) =
  %   \nabla\risk_j(f(r)_j)$} the $S$-surplus is allocated, meaning
  % $\Phi_S(f(r)_S)=0$, and (c) $f(r)_j = r_j$ for all $j\notin S$.
\end{definition}

The following result shows that trade functions have remarkable structure: once the subset $S$ is specified, the trade function is completely determined, up to cash transfers.  In other words, the surplus is removed from the position vectors, and then it is redistributed as cash to the traders, and the choice of trade function is merely in how this redistribution takes place.  The fact that the derivatives match has strong intuition from prediction markets: agents must agree on the price.

\begin{theorem}
  \label{thm:risknets-trade-function}
  The trade functions on any $S\subseteq[N]$ are unique up to zero-sum cash transfers.
    % Let $f : \pos^N \to \pos^N$ be a trade function on $S$.  Then 
	% so is $f'$ where $f'(r)_i = f(r)_j + c_i \cash$ for all $i \in S$, 
	% $f'(r)_j = f(r)_j$ for all $j \notin S$, and $\sum_{i\in S} c_i = 0$.
  Moreover, if $f$ is a trade function on $S$, then $\nabla\risk_i(f(r)_i) = \pi_S^*$ for all $i$, where
$\pi_S^* = \min_{\pi\in\Pi} \sum_{i\in S} \alpha_i(\pi) - \inner{\pi,\sum_{i\in S}r_i}$.
\end{theorem}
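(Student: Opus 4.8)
The plan is to reduce everything to a statement about \emph{optimal allocations}, then dualize. Write $R:=\sum_{i\in S}r_i$. Properties (1) and (2) in Definition~\ref{def:risknets-trade-func} say exactly that the post‑trade tuple $\{f(r)_i\}_{i\in S}$ sums to $R$ and attains the infimal convolution, i.e.\ $\sum_{i\in S}\risk_i(f(r)_i)=(\infc_{i\in S}\risk_i)(R)$, while (3) fixes the non‑participants. I would also record the easy observation that adding cash $c_i\cash$ to each $i\in S$ with $\sum_{i\in S}c_i=0$ preserves all three properties (immediate from cash invariance: each $\risk_i$ drops by $c_i$, the total is unchanged, so $\Phi_S$ stays $0$). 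So it suffices to show: (a) optimal allocations of a fixed $R$ among $S$ are unique up to zero‑sum cash transfers, and (b) at any optimal allocation $\{r_i^*\}_{i\in S}$ every gradient $\nabla\risk_i(r_i^*)$ equals the dual optimizer $\pi_S^*$.

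For uniqueness, suppose $\{r_i\}_{i\in S}$ and $\{r_i'\}_{i\in S}$ are two optimal allocations of the same $R$. Their midpoint also sums to $R$, so by convexity $\sum_{i\in S}\risk_i\!\big(\tfrac{r_i+r_i'}{2}\big)\le(\infc_{i\in S}\risk_i)(R)$; since the right‑hand side is the minimum, equality holds, and \emph{strict risk aversion} forces $r_i-r_i'=\lambda_i\cash$ for each $i\in S$. Summing over $S$ and using $\sum_{i\in S}(r_i-r_i')=0$ gives $\big(\sum_{i\in S}\lambda_i\big)\cash=0$, hence $\sum_{i\in S}\lambda_i=0$ since $\cash\neq 0$. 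Thus any two optimal allocations — and therefore, with (3), any two trade functions on $S$ — differ by a zero‑sum cash transfer. This is also consistent with claim (b), because differentiating cash invariance in $r$ shows $\nabla\risk_i(r+c\cash)=\nabla\risk_i(r)$, so the gradient identity does not depend on the representative chosen in the cash‑equivalence class.

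For the common gradient, I would first get the first‑order condition by a direct perturbation: for $i\neq j$ in $S$ and any $v\in\pos$, the allocation obtained by moving $\epsilon v$ from $j$ to $i$ still sums to $R$, and optimality at $\epsilon=0$ gives $\big(\nabla\risk_i(r_i^*)-\nabla\risk_j(r_j^*)\big)\cdot v=0$ for all $v$; since each $\risk_i$ is differentiable (Expressiveness), all the gradients coincide, say at $\pi^*$. To name $\pi^*$, combine the conjugate identity~\eqref{eq:risknets-inf-conv-dual}, $(\infc_{i\in S}\risk_i)^*=\sum_{i\in S}\risk_i^*$, with the representation Theorem~\ref{thm:cvx-risk-dual}, which (since $\risk_i=\alpha_i^*(-\,\cdot\,)$) gives $\risk_i^*$ equal to $\alpha_i$ with its argument negated on the appropriate domain; dualizing $(\infc_{i\in S}\risk_i)(R)$ then produces precisely the program defining $\pi_S^*$, and the Fenchel–Young equality (equality in $\risk_i(r_i^*)+\risk_i^*(y)\ge\inner{y,r_i^*}$ holds iff $y=\nabla\risk_i(r_i^*)$) identifies $\pi^*$ with $\pi_S^*$, up to the sign convention fixed by~\eqref{eq:convex-risk}. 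The fiddly step — the one I expect to be the main obstacle — is the boundary/attainment bookkeeping here: one must check that the dual optimum is attained inside $\Pi$ rather than only in $\relint(\Pi)$ or its closure, and that differentiability of each $\risk_i$ (equivalently, essential strict convexity of $\alpha_i$) makes the maximizer in~\eqref{eq:convex-risk} unique, so that $\nabla\risk_i$ is single‑valued and genuinely equals $\pi_S^*$.
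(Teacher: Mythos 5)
Your proof is correct, and its skeleton matches the paper's: establish that all gradients $\nabla\risk_i(f(r)_i)$ coincide, identify the common value with $\pi_S^*$ via the conjugate identity~\eqref{eq:risknets-inf-conv-dual}, and then use strict risk aversion plus the conservation constraint to pin down the allocation up to zero-sum cash transfers. The differences are in how you discharge the two key steps, and in both cases your version is more self-contained. For the gradient agreement, the paper simply cites a general result on infimal convolutions from Hiriart-Urruty and Lemar\'echal, whereas you derive it from first principles by perturbing the allocation ($\epsilon v$ from $j$ to $i$) and using differentiability; this is elementary and makes explicit exactly where Expressiveness is used. For uniqueness, the paper argues that two positions of agent $i$ with the same gradient must differ by a multiple of $\cash$ (else convexity would force $\risk_i$ to be affine between them, contradicting strict risk aversion), and then invokes the sum constraint; your midpoint argument reaches the same conclusion directly from optimality of both allocations without first passing through the gradient characterization, which is a mild but genuine simplification of the logical dependencies. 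You also verify the converse direction — that zero-sum cash transfers preserve the trade-function properties — which the paper leaves implicit but which is needed for ``unique up to'' to mean what it says. Finally, the ``fiddly step'' you flag (attainment of the dual optimum in $\relint(\Pi)$, single-valuedness of the maximizer in~\eqref{eq:convex-risk}, and the sign convention relating $\nabla\risk_i$ to $\pi_S^*$) is glossed over in the paper as well; your explicit acknowledgment of it is a point in your favor rather than a gap, though a fully rigorous write-up would indeed need to settle it.
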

\begin{proof}
  By \cite[eq. X1.3.4.5]{hiriart1993grundlehren}, which gives a very general result about infimal convolutions, we have that the condition $\Phi_S(f(r)_S) = 0$ implies the existence of some $\pi$ such that $\nabla \risk_i(f(r)_i) = \pi$ for all $i\in S$.  The fact that $\pi = \pi_S^*$ follows by eq.~\eqref{eq:risknets-inf-conv-dual} and the definition of the conjugate.  We can now characterize such trade functions: by cash invariance, it is clear that $\nabla \risk_i(f(r)_i) = \nabla \risk_i(f(r)_i+c_i\cash)$ for all $c_i\in\reals$, and the strict risk aversion property says that these are the only such positions with the same derivative (otherwise convexity would imply $\risk_i$ is flat in between, a contradiction).  The requirement that $\sum_i f(r)_i = \sum_i r_i$ ensures $\sum_i c_i = 0$, meaning $f(r)$ is unique up to zero-sum cash transfers.
	% The function $f'$ satisfies property 3 in 
	% Definition~\ref{def:risknets-trade-func} by definition and property 1 
	% since $\sum_{i\in S} c_i = 0$.
	% Property 2 holds for $f'$ since 
	% $\Phi_S(f'(r)_S) = \sum_{i \in S} \rho_i(f(r)_i + c_i\ones)$ and
	% since each $\rho_i$ is cash invariant we have that 
	% $\Phi_S(f'(r)_S) = \sum_{i\in S} \rho_i(f(r)_i) - c_i = \Phi_S(f(r)_S)$
	% since $\sum_{i\in S} c_i = 0$.
\end{proof}

Our notion of trade dynamics, defined below, is quite intuitive --- predefined groups of agents $S_i$ gather at random to negotiate a trade which minimizes their total risk, subject to the constraint that trading may only be among those gathered. 

\begin{definition}
  \label{def:risknets-trade-dynamic}
  Given $m$ subsets $\Sc = \{S_i\}_{i=1}^m$ and $m$ trade functions $f_i$ on $S_i$, and a distribution $p\in\Delta_m$ with full support, a \emph{trade dynamic} is the randomized algorithm which selects $f_i$ with probability $p_i$ and takes $r^{t+1} = f_i(r^t)$.  A \emph{fixed point} $r$ of the trade dynamic is a point with $f_i(r) = r$ for all $i\in[m]$.
\end{definition}

We now give a few natural instantiations of trade dynamics which we will use throughout the paper.

Let $G$ be a directed graph with a node for each agent.  An \emph{edge dynamic} has a trade function $f_{(i,j)}$ on $\{i,j\}$ for each edge $(i,j)$ in $G$, where if $r'=f_{(i,j)}(r)$ we have $\risk_j(r_j') = \risk_j(r_j)$ and $\risk_i(r_i') = \risk_i(r_i) - \Phi_{\{i,j\}}(r_{\{i,j\}})$.  In other words, the agents minimize their collective risks, but agent $i$ takes all of the surplus.  Similarly, a \emph{node dynamic} has a trade function $f_i$ for each node $i\in[N]$, on $S_i = \{j : (i,j)\in E(G)\} \cup \{i\}$, the out-neighborhood of $i$, and $r'=f_i(r)$ satisfies $\risk_j(r_j') = \risk_j(r_j)$ for $j\in S_i \setminus \{i\}$ while $\risk_i(r_i') = \risk_i(r_i) - \Phi_{S_i}(r_{S_i})$.

A third dynamic we will consider uses a notion of fairness; call a trade function $f$ on $S$ \emph{fair} if it satisfies $\risk_i(f(r)_i) = \risk_i(r_i) - \tfrac 1 {|S|} \Phi_S(r_S)$ for all $i\in S$.  Then a \emph{fair trade dynamic} is simply a mixture of fair trade functions.  Returning to the graph theme, we may define fair versions of the node and edge dynamics above, in the natural way.

For all these types of trade dynamics, we will see that the only crucial property is that of \emph{connectedness}, which ensures that trades can eventually travel from one agent to any other.  Given this property, we show a quite general equilibrium result.

\begin{definition}
  \label{def:risknets-dynamic-connected}
  A trade dynamic with subsets $\Sc$ is \emph{connected} if the hypergraph with nodes $[N]$ and hyperedges $\Sc$ is a connected hypergraph.
\end{definition}

\begin{theorem}
  \label{thm:dynamics-equilibrium}
  Let $\pi^* = \min_{\pi\in\Pi} \sum_i \alpha_i(\pi) - \inner{\pi,r^0}$.  There exists $r^*\in\pos^N$ such that for all connected trade dynamics $D$, the unique fixed point of $D$ is $r^*$, up to zero-sum cash transfers. Moreover, $\Phi(r^*) = 0$ and $\nabla \risk_i(r_i^*) = \pi^*$ for all $i$.
\end{theorem}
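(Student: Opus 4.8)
The plan is to characterize fixed points of a connected trade dynamic directly, and then show any such fixed point must have the global surplus equal to zero with all gradients equal to $\pi^*$. First I would observe that if $r$ is a fixed point then $f_i(r) = r$ for every $i \in [m]$, which by the definition of a trade function on $S_i$ forces $\Phi_{S_i}(r_{S_i}) = 0$ for each $i$. Applying Theorem~\ref{thm:risknets-trade-function} to each $S_i$, the condition $\Phi_{S_i}(r_{S_i}) = 0$ implies there is some common price $\pi_{S_i}$ with $\nabla \risk_j(r_j) = \pi_{S_i}$ for all $j \in S_i$. Now I would invoke connectedness: since the hypergraph on $[N]$ with hyperedges $\Sc$ is connected, any two agents are linked by a chain of overlapping $S_i$'s, and since $\nabla\risk_j(r_j)$ is pinned to a single value on each $S_i$, these values must all agree. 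Hence there is a single $\pi^\dagger \in \Pi$ with $\nabla\risk_i(r_i) = \pi^\dagger$ for all $i \in [N]$.

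Next I would identify $\pi^\dagger$ with $\pi^*$ and verify $\Phi(r) = 0$. The key is that $\nabla\risk_i(r_i) = \pi^\dagger$ for all $i$ together with the representation $\risk_i = \alpha_i^*(-\cdot)$ means that $r_i$ achieves the supremum defining $\risk_i$ at $\pi^\dagger$, i.e., $-r_i \in \partial\alpha_i(\pi^\dagger)$ (by conjugate duality, $\nabla\risk_i(r_i) = \pi$ iff $-r_i \in \partial\alpha_i(\pi)$). Summing the subgradient inclusions gives $-\sum_i r_i \in \partial(\sum_i \alpha_i)(\pi^\dagger)$. But a fixed point of a connected dynamic conserves total position — each trade function preserves $\sum_{i} r_i$ — so $\sum_i r_i = r^0$, giving $-r^0 \in \partial(\sum_i \alpha_i)(\pi^\dagger)$, which is exactly the first-order optimality condition for the strictly convex problem $\min_{\pi\in\Pi} \sum_i \alpha_i(\pi) - \inner{\pi, r^0}$. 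Hence $\pi^\dagger = \pi^*$. For $\Phi(r) = 0$: by eq.~\eqref{eq:risknets-inf-conv-dual}, $(\wedge_i\risk_i)^* = \sum_i \alpha_i^*{}^* = \sum_i \alpha_i$ on the relevant domain, and the matching-gradient condition is precisely the case of equality in the infimal convolution, so $(\wedge_i\risk_i)(r^0) = \sum_i \risk_i(r_i)$, i.e., $\Phi(r) = 0$.

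It remains to pin down uniqueness up to zero-sum cash transfers and existence. For uniqueness: if $r$ and $\tilde r$ are two fixed points, both satisfy $\nabla\risk_i(r_i) = \nabla\risk_i(\tilde r_i) = \pi^*$, and by strict risk aversion (as used in the proof of Theorem~\ref{thm:risknets-trade-function}) the only positions with a given gradient differ by a cash multiple, so $\tilde r_i = r_i + c_i\cash$; conservation of total position forces $\sum_i c_i = 0$. Define $r^*$ to be any such fixed point — crucially it depends only on $r^0$ and the $\risk_i$, not on the particular connected dynamic $D$, since the derivation above used only connectedness and conservation. For existence, I would argue that the minimizer $\pi^*$ exists (the objective is closed convex and, under strict risk aversion, strictly convex in the direction transverse to $\cash$; $\Pi$ is compact), take $r_i^*$ with $-r_i^* \in \partial\alpha_i(\pi^*)$ normalized so that $\sum_i r_i^* = r^0$ (possible by adjusting cash), and check this is a fixed point of every connected $D$ by reversing the argument: matching gradients on each $S_i$ gives $\Phi_{S_i}(r^*_{S_i}) = 0$, and each $f_i$ then acts as the identity up to a cash transfer which the trade-function specification fixes to zero at a point already satisfying $\Phi_{S_i} = 0$.

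The main obstacle I anticipate is the existence half: ensuring the minimizer $\pi^*$ is attained in $\relint(\Pi)$ rather than escaping to the relative boundary where the $\alpha_i$ may blow up, and then lifting a common subgradient selection $-r_i^* \in \partial\alpha_i(\pi^*)$ back to genuine positions in $\pos$ whose gradients under $\risk_i$ are exactly $\pi^*$ (this uses Expressiveness, which guarantees $\pi^*$ is in the closure of the image of $\nabla\risk_i$, hence — with differentiability and strict convexity — actually attained). Handling the degenerate case where $\sum_i\alpha_i$ is not strictly convex (the $\cash$ direction) requires care, but cash invariance reduces it to uniqueness modulo cash, which is exactly the statement being proved. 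Everything else is a routine assembly of conjugate duality, the infimal-convolution identity~\eqref{eq:risknets-inf-conv-dual}, and the connectedness propagation argument.
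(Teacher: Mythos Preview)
Your proposal is correct and follows the same skeleton as the paper: from $f_i(r)=r$ deduce $\Phi_{S_i}=0$, apply Theorem~\ref{thm:risknets-trade-function} to get matching gradients on each $S_i$, propagate via connectedness to a single common $\pi$, and then conclude $\Phi(r)=0$ and $\pi=\pi^*$. The only substantive difference is in that last step: the paper outsources it to \cite[Prop.~XI.3.4.2]{hiriart1993grundlehren} and then points back to the proof of Theorem~\ref{thm:risknets-trade-function}, whereas you unpack it explicitly via Fenchel--Young and the infimal-convolution identity~\eqref{eq:risknets-inf-conv-dual}. Your version is more self-contained; the paper's is terser. You also spell out existence of $r^*$, which the paper leaves entirely implicit. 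One small caution on that existence step: ``normalize so that $\sum_i r_i^* = r^0$ by adjusting cash'' presumes that $r^0 - \sum_i r_i^*$ is already a multiple of $\cash$, which is not automatic from an arbitrary selection $-r_i^*\in\partial\alpha_i(\pi^*)$. The clean fix is either to invoke the subdifferential sum rule $\partial(\sum_i\alpha_i)(\pi^*)=\sum_i\partial\alpha_i(\pi^*)$ so that the $r_i^*$ can be chosen summing to $r^0$ from the start, or simply to take $r^*$ to be any minimizer of the infimal convolution at $r^0$.
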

\begin{proof}
  Let $D = (\{S_i\}_{i=1}^m,\{f_i\}_{i=1}^m,p)$ be any connected trade dynamic, and assume $f_i(r) = r$ for all $i$.  This implies $\Phi_{S_i}(r_{S_i}) = 0$ for all $i$.  By Theorem~\ref{thm:risknets-trade-function},
this means that for all $i$ and all $j,j'\in S_i$ we have $\nabla \risk_j(r_j) = \nabla \risk_{j'}(r_{j'})$.  This gives us $m$ equivalence classes of derivatives, and by connectedness of the trade dynamic, we in fact have some $\pi$ for which $\nabla \risk_j(r_j) = \pi$ for all $j\in[N]$.  We can now appeal to \cite[Prop X1.3.4.2]{hiriart1993grundlehren}, which again is for general infimal convolutions, to conclude $\Phi(r) = 0$.  As this was the starting point in the proof of Theorem~\ref{thm:risknets-trade-function}, we immediately obtain $\pi=\pi^*$, and $r=r^*$ up to cash transfers.

%   By Lemma~\ref{lem:risknets-inf-conv}, 
% this means that for all $i$ we have some $\pi_i$ such that $\nabla \risk_j(r_j) = \pi_i$ for all $j \in S_i$.  This gives us $m$ equivalence classes of derivatives, and by connectedness of $D$, we in fact have some $\pi$ for which $\nabla \risk_j(r_j) = \pi$ for all $j\in[N]$.  Again applying Lemma~\ref{lem:risknets-inf-conv} gives both $\Phi(r) = 0$ and $\pi=\pi^*$.  The argument from Theorem~\ref{thm:risknets-trade-function} then gives uniqueness up to cash transfers.
\end{proof}

The result of Theorem~\ref{thm:dynamics-equilibrium} is somewhat surprising --- not only is there a unique equilibrium (up to cash transfers) for all connected dynamics, but all connected dynamics have the \emph{same} equilibrium!  If one restricts to connected graphical networks, this means that the equilibrium does not depend on the network structure.  The power of our framework is that the equilibrium analysis holds regardless of the way agents interact, as long as information is allowed to spread to all agents eventually.  In fact, one could even consider an arbitrary process choosing subsets $S^t$ of agents to trade at each time $t$; if the set $\Sc$ of subsets which are visited infinitely often yields a connected hypergraph, then the proof Theorem~\ref{thm:dynamics-equilibrium} still applies.

\section{Rates via Coordinate Descent}

Now that the existence of an equilibrium has been established, we turn to the question of convergence.  In this section, we will show that our trade dynamics are performing a type of randomized coordinate descent algorithm, where the coordinate subspaces correspond to subsets $S$ of agents.  
Our coordinate subspaces are more general than is currently considered in the literature, so to derive our convergence rates we will first need to  introduce a generalization of existing coordinate descent methods.  Using standard techniques to analyze it, we will arrive at convergence rates for very general classes of our dynamics.

Before reviewing the literature on coordinate descent, let us briefly see why this is a useful way to think of our dynamics.  Recall that we have $m$ subsets of agents $S_i$, and that each trade function $f_i$ only modifies the positions of agents in $S_i$.  Thinking of $(r_1,\ldots,r_N)$ as a large $N k$ vector (recall $\pos = \reals^k$), the trade function $f_i$ is thus modifying only $|S_i|$ blocks of $k$ entries. %$i\in\{(i'-1)|\outcomes|+1,i'|\outcomes|+2,\ldots,i'|\outcomes|\}$ where $i'\in[N]$
Moreover, $f_i$ is \emph{minimizing} the sum of the risks of agents in $S_i$.  Hence, ignoring for now the constraint that the sum of the positions remain constant, $f_i$ is performing a block coordinate descent step of the surplus function $\Phi$ on this block of coordinates.

\subsection{Randomized coordinate descent}

Several randomized coordinate descent methods have appeared in the literature recently, with increasing levels of sophistication.
While earlier methods focused on updates which only modified disjoint blocks of coordinates~\citep{nesterov2012efficiency,richtarik2014iteration}, more recent methods allow for more general configurations, such as overlapping blocks~\citep{necoara2013random,necoara2014random,reddi2014large}.  In fact, these last three methods are closest to what we study here; the authors consider an objective which decomposes as the sum of convex functions on each coordinate, and study coordinate updates which follow a graph structure, all under the constraint that coordinates sum to $0$.  Despite the similarity of these methods to our trade dynamics, we require even more general updates, as we allow coordinate $i$ to correspond to arbitrary subsets $S_i$.

Fortunately, as we will see, there is a common technique underpinning the five papers mentioned above, and by viewing this technique abstractly, we can generalize it to our setting.  Roughly speaking, the recipe is as follows:
\begin{enumerate}\setlength{\itemsep}{0pt}
\item Derive a quadratic upper bound via Lipschitz continuity;
\label{item:lipschitz}
\item Minimize this upper bound to obtain the update step; \label{item:update}
\item Pick a norm based on the update which captures the expected progress per iteration;
\label{item:norm}
\item Use the definition of the dual norm and the convexity of the objective to relate this progress to the optimality gap and a global notion of distance (the function $\Rc^2$ below);
\label{item:gap-bound}
\item Chain the per-iteration progress bounds into a convergence rate.
\label{item:chain-bounds}
\end{enumerate}

We now follow this recipe to present and analyze a general randomized coordinate descent method, Algorithm~\ref{alg:coord}, for a convex objective $F$ which performs updates on arbitrary subspaces.  We will represent these subspaces as matrices $\{A_i\in\reals^{n\times k_i}\}$, where an update in coordinate $i$ is constrained to be in the image space of $A_i$.  In other words, if $x^{t+1} \gets x^t + d$, we require $d \in \im(A_i)$.  We believe that our analysis can be used to recover the smooth-objective results from the five papers mentioned above.

\raf{FOR LATER: I believe the analysis is much simpler to label $P_i := A_i\aip$ and just say that some orthogonal projection matrices $P_i\in\reals^{n\times n}$ are given (perhaps we should still stick to $A_i$ or even $U_i$ to avoid confusion with $p_i$).  This eliminates the need to (a) write $\aip$ everywhere, and (b) keep track of the dimensions $k_i$ of the various subspaces.}

\begin{algorithm}
  \caption{Randomized Coordinate Descent, adapted from~\cite[Alg. 3]{richtarik2014iteration}}
  \label{alg:coord}
  \begin{algorithmic}[1]
    \Require{Convex function $F:\reals^n\to\reals$, initial point $x^0\in\reals^n$, matrices $\{A_i\in\reals^{n\times k_i}\}_{i=1}^m$, smoothness parameters $\{L_i\}_{i=1}^m$, distribution $p\in\Delta_m$}
    \For{iteration $t$ in $\{0,1,2,\cdots\}$}
    \State Sample $i$ from $p$
    \Let{$x^{t+1}$}{$x^t - \tfrac 1 {L_i} A_i \aip \,\nabla F(x^t)$} \label{alg:coord:update}
    \EndFor
  \end{algorithmic}
\end{algorithm}

We will assume that $F$ is $L_i$-smooth with respect to the image space of the $A_i$; this is step~\ref{item:lipschitz} of our recipe.  Precisely, we require the existence of constants $L_i$ such that for all $y\in\im(A_i)$,
\begin{equation}
  \label{eq:strong-smooth}
  F(x+y) \leq F(x) + \inner{\nabla F(x),y} + \tfrac {L_i} 2 \|y\|_2^2~,
\end{equation}
and refer to this condition as $F$ being \emph{$L_i$-$A_i$-smooth}.  Note that as prescribed by step~\ref{item:update} of our general approach, minimizing this bound over all $x'$ for $y = A_i x'$ yields the update on line~\ref{alg:coord:update} of the algorithm.  \raf{With projections, this is just observing that the minimizer of $\|z-y\|^2$ over $y\in\im(P_i)$ is simply $y = P_i z$, which holds by basic properties of orthogonal projections.}

On to step~\ref{item:norm}, we now introduce a seminorm $\|\cdot\|_A$ which will measure the progress per iteration of Algorithm~\ref{alg:coord}:
\begin{equation}
  \label{eq:coord-norm}
  \|x\|_A := \left(\sum_{i=1}^m \frac {p_i} {L_i} \|A_i \aip x\|_2^2\right)^{1/2},
\end{equation}
where $M^+$ denotes the Moore-Penrose pseudoinverse of $M$.
Note that this is a Euclidean seminorm $\|x\|_A = \inner{Ax,x}$ with $A = \sum \frac {p_i}{L_i} A_i\aip$.  By allowing this to be a seminorm, we can easily capture linear constraints, such as $\sum_i x_i = c$ for some constant $c$; to achieve this, simply ensure that it holds for $x^0$ and that $\ones \in \ker A_i$ for all $i$.  Also, in contrast to~\cite{nesterov2012efficiency,richtarik2014iteration}, we do not assume that the matrices $A_i$ have disjoint images.

\newcommand{\Farg}{F^{\mathrm{arg}}}
\newcommand{\Fmin}{F^{\mathrm{min}}}

Finally, for step~\ref{item:gap-bound} of our recipe, we will need the dual norm of $\|\cdot\|_A$, from which we may define the distance function we need.  Let $X(A) := \{x^0 + Ay:y\in\reals^n\}$ denote the optimization domain, and $\Fmin := \min_{x\in X(A)} F(x)$ and $\Farg := \argmin_{x\in X(A)} F(x)$ denote the minimum and minimizers of $F$, respectively.
\begin{align}
  \|y\|_A^* &:=
  \begin{cases}
    \inner{A^+ y,y}^{1/2} & \text{if } y\in\im(A)\\
    \infty & \text{otherwise}.
  \end{cases}
  \label{eq:risknets-dual-norm}
  \\
  \Rc^2(x_0) &:= \max_{x\in X(A):F(x)\leq F(x^0)} \; \max_{x^* \in \Farg} \|x-x^*\|_A^{*\;2}~.
  \label{eq:R}
\end{align}
One can indeed check that $\|\cdot\|_A^*$ is the dual norm of $\|\cdot\|_A$, in the sense that $\left(\frac 1 2 \|\cdot\|_A^2\right)^* = \frac 1 2 \|\cdot\|_A^{*\,2}$.

We are now ready to prove an $O(1/t)$ convergence rate for Algorithm~\ref{alg:coord}.  Our analysis borrows heavily from~\cite{richtarik2014iteration} and~\cite{necoara2014random}; we give the full proof in Appendix~\ref{sec:risknets-proof-coord} for completeness.
\begin{theorem}
  \label{thm:coord}
  Let $F$, $\{A_i\}_i$, $\{L_i\}_i$, $x^0$, and $p$ be given as in Algorithm~\ref{alg:coord}, with the condition that $A_i$ are full rank and
$F$ is $L_i$-$A_i$-smooth for all $i$.
Then\vspace{-6pt}
  \begin{equation}
    \label{eq:risknets-1}
    \E{}{F(x^t) - \Fmin} \leq \frac{2\Rc^2(x^0)}{t}~.
  \end{equation}
  % Then given some desired error $\epsilon > 0$ sufficiently small and confidence $\delta \in [0,1]$, Algorithm~\ref{alg:coord} satisfies
  % \begin{equation}
  %   \label{eq:risknets-1}
  %   Pr[F(x^t) - \min_x F(x) \leq \epsilon] \geq 1-\delta
  % \end{equation}
  % after $t \geq \epsilon^{-1} 2 \Rc^2(x^0) (1-\log \delta)-2$ iterations.
\end{theorem}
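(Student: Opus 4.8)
The plan is to follow the five-step recipe spelled out above, specialized to Algorithm~\ref{alg:coord}, and then chain the per-iteration progress into the rate.

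\emph{One-step decrease (recipe steps 1--2).} Fix $x^t$ and suppose index $i$ is sampled. The step $d = -\tfrac1{L_i}A_i\aip\nabla F(x^t)$ lies in $\im(A_i)$, so I may plug $y=d$ into the $L_i$-$A_i$-smoothness bound~\eqref{eq:strong-smooth}. Since $A_i\aip$ is the orthogonal projector onto $\im(A_i)$, we have $\inner{\nabla F(x^t),d}=-\tfrac1{L_i}\|A_i\aip\nabla F(x^t)\|_2^2$ and $\|d\|_2^2=\tfrac1{L_i^2}\|A_i\aip\nabla F(x^t)\|_2^2$, which combine to give $F(x^{t+1})\le F(x^t)-\tfrac1{2L_i}\|A_i\aip\nabla F(x^t)\|_2^2$. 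In particular $F$ is nonincreasing along every realization. Averaging over $i\sim p$ and recalling the definition~\eqref{eq:coord-norm} of $\|\cdot\|_A$ yields
\[
  \E{}{F(x^{t+1})\mid x^t}\ \le\ F(x^t)-\tfrac12\,\|\nabla F(x^t)\|_A^2 .
\]

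\emph{Bounding the gap (recipe steps 3--4).} Let $x^*\in\Farg$. Since $F$ is nonincreasing, $F(x^t)\le F(x^0)$, so $\|x^t-x^*\|_A^{*\,2}\le\Rc^2(x^0)$ by~\eqref{eq:R}; also $z:=x^t-x^*\in\im(A)$ because $x^t,x^*\in X(A)$. Write $z=Aw$ with $w:=A^+z$, so that $\|z\|_A^{*\,2}=\inner{A^+z,z}=\inner{Aw,w}$ and $\inner{\nabla F(x^t),z}=\inner{A\nabla F(x^t),w}$. Cauchy--Schwarz for the PSD form $\inner{A\cdot,\cdot}$ gives $\inner{A\nabla F(x^t),w}^2\le\inner{A\nabla F(x^t),\nabla F(x^t)}\,\inner{Aw,w}=\|\nabla F(x^t)\|_A^2\,\|z\|_A^{*\,2}$. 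Combined with convexity of $F$,
\[
  F(x^t)-\Fmin\ \le\ \inner{\nabla F(x^t),x^t-x^*}\ \le\ \|\nabla F(x^t)\|_A\,\Rc(x^0),
\]
so $\|\nabla F(x^t)\|_A^2\ge(F(x^t)-\Fmin)^2/\Rc^2(x^0)$ (when the left side is $0$ this forces $F(x^t)=\Fmin$, so the conclusion holds trivially).

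\emph{The recursion (recipe step 5).} Put $\delta^t:=F(x^t)-\Fmin\ge0$ and $R^2:=\Rc^2(x^0)$. The two displays give $\E{}{\delta^{t+1}\mid x^t}\le\delta^t-(\delta^t)^2/(2R^2)$; taking full expectations and applying Jensen ($\E{}{(\delta^t)^2}\ge(\E{}{\delta^t})^2$), we get $e_{t+1}\le e_t-e_t^2/(2R^2)$ with $e_t:=\E{}{\delta^t}$. If $e_t=0$ for some $t$ the bound is clear; otherwise divide by $e_te_{t+1}>0$ and use $e_{t+1}\le e_t$ to obtain $1/e_{t+1}\ge 1/e_t+1/(2R^2)$, which telescopes to $1/e_t\ge t/(2R^2)$, i.e.\ $\E{}{F(x^t)-\Fmin}\le 2\Rc^2(x^0)/t$ (vacuous at $t=0$ or when $\Rc^2(x^0)=\infty$).

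\emph{Main obstacle.} The one delicate point is the gap bound: $\nabla F(x^t)$ need not lie in $\im(A)$ and $\|\cdot\|_A$ is only a seminorm, so one must apply the generalized Cauchy--Schwarz inequality to the pair $(\nabla F(x^t),\,x^t-x^*)$ with the second entry in $\im(A)$, while keeping the pseudoinverse identities ($A^+AA^+=A^+$, and $A\aip$, $AA^+$ orthogonal projectors) straight; the full-rank hypothesis on the $A_i$ serves only to keep these identities clean. Steps 1, 5, and the expectation/Jensen manipulations are routine bookkeeping, which — as the authors note — can be deferred to the appendix.
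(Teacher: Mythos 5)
Your proposal is correct and follows essentially the same route as the paper's proof in Appendix~\ref{sec:risknets-proof-coord}: the same one-step decrease $F(x^{t+1})\le F(x^t)-\tfrac1{2L_i}\|A_i\aip\nabla F(x^t)\|_2^2$, the same expected-progress identity in $\|\cdot\|_A$, the same convexity/dual-norm bound against $\Rc(x^0)$, and the same inverse-gap telescoping. The only (welcome) differences are presentational --- you use the orthogonal-projector property of $A_i\aip$ where the paper computes the maximizer of the quadratic model explicitly, you verify the dual-norm inequality via Cauchy--Schwarz for the PSD form $\inner{A\cdot,\cdot}$ rather than citing it, and you make the Jensen step explicit.
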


To illustrate the power of Theorem~\ref{thm:coord}, we show in Appendix~\ref{sec:risknets-nesterov-bounds} how to leverage results from spectral graph theory to recover results in the literature for specific graphs, each of which correspond to special cases of the subspaces $\{A_i\}_i$.  For now, we simply apply our results to the risk network framework.

\subsection{Application to trade dynamics}

To apply Theorem~\ref{thm:coord} to our setting, we will simply take $F = \Phi$, viewed as a function on $\reals^{N k}$, and construct subspaces which correspond to the trade function subsets $S_i$.  (Recall that $\pos = \reals^k$.)  For each $S_i$, pick some $j^* \in S_i$ and let $B_i \in \reals^{N\times (|S_i|-1)}$ with columns $e_j - e_{j^*}$ for $j\in  S_i\setminus\{j^*\}$, where the $e_j$ are standard column vectors with $1$ in position $j$ and $0$ elsewhere.  This enforces the constraint that $B_i y$ sums to $0$ for all $y\in\reals^{|S_i|-1}$.  To turn this into a matrix of the correct dimensions, we will merely take the Kronecker product with the $k\times k$ identity matrix, $I_{ k}$.  Putting this all together, our setting can be expressed as
\begin{align*}
  F:\reals^{N k}\to\reals;\;\;  F(x) = \Phi(x)\\
  A_i\in\reals^{N k\times(|S_i|-1) k};\;\; A_i = B_i\otimes\!I_{ k}~.
\end{align*}

Of course, to apply Theorem~\ref{thm:coord}, we will need $\Rc^2(x^0)$ to be finite; otherwise the bounds given are meaningless.  This is easily obtained, as the set $\{x:F(x)\leq F(x^0)\}$ is bounded (and convex), and 
for all $x^*\in \Farg$ and $x\in X(A)$ we have $x - x^* \in \im(A)$, so by eq.~\eqref{eq:risknets-dual-norm} $\|x-x^*\|_A^*$ is also bounded.

\begin{theorem}
  \label{thm:dynamics-rates}
  For any connected trade dynamic, we have $\E{}{\Phi(r^t)} = O(1/t)$.
\end{theorem}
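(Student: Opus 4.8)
The plan is to derive Theorem~\ref{thm:dynamics-rates} as an almost immediate corollary of Theorem~\ref{thm:coord}, once we verify that a connected trade dynamic really is an instance of Algorithm~\ref{alg:coord} applied to $F = \Phi$ with the subspaces $A_i = B_i\otimes I_k$ described above. First I would make precise the correspondence between a trade step $r^{t+1} = f_i(r^t)$ and a coordinate-descent step $x^{t+1} = x^t - \tfrac{1}{L_i}A_i\aip\nabla F(x^t)$. By Theorem~\ref{thm:risknets-trade-function}, applying $f_i$ replaces the positions of agents in $S_i$ by the (cash-normalized) minimizer of $\sum_{j\in S_i}\risk_j(\cdot)$ subject to the sum being held fixed; equivalently, it minimizes $\Phi$ over the affine slice $\{x^t + A_i y : y\}$. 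So $f_i$ performs an \emph{exact} block minimization on $\im(A_i)$. Since $\Phi$ restricted to such a slice is convex and (by strict risk aversion plus differentiability) has a unique minimizer up to the cash direction, and since the cash direction lies in $\ker A_i$ once we fix a representative, this exact minimizer coincides with the iterate produced by line~\ref{alg:coord:update} — provided $L_i$ is chosen to be a valid $A_i$-smoothness constant of $\Phi$. Thus the two processes generate the same sequence of risk profiles (up to zero-sum cash transfers, which do not affect $\Phi$).

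Next I would discharge the hypotheses of Theorem~\ref{thm:coord}. The matrices $A_i$ are full rank by construction (the $e_j - e_{j^*}$ are linearly independent, and Kronecker with $I_k$ preserves rank). Existence of finite $L_i$ requires that $\Phi$ be $L_i$-$A_i$-smooth, i.e.\ that $\nabla\Phi$ be Lipschitz on $\im(A_i)$; here I would invoke differentiability of each $\risk_i$ together with the fact that on the sublevel set $\{x : \Phi(x)\leq\Phi(x^0)\}$ — which, as noted in the excerpt, is bounded and convex — the gradient is locally Lipschitz, hence Lipschitz on this compact region, so a valid $L_i$ exists. The seminorm $\|\cdot\|_A$ of eq.~\eqref{eq:coord-norm} has $\ones\in\ker A_i$ for every $i$ (each column $e_j - e_{j^*}$ is orthogonal to $\ones$), so the linear constraint $\sum_i r_i = r^0$ is respected throughout, exactly as the construction demands. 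Finally, connectedness of the trade dynamic is what guarantees $\Rc^2(x^0)<\infty$ in a meaningful way: I would argue that $A = \sum_i \tfrac{p_i}{L_i}A_i\aip$ has kernel equal to $\mathrm{span}\{\ones\otimes v : v\in\reals^k\}$ precisely when the hypergraph $([N],\Sc)$ is connected, so that on the relevant affine domain $X(A)$ the dual seminorm $\|\cdot\|_A^*$ is finite on differences $x - x^*$, and combined with boundedness of the sublevel set gives $\Rc^2(x^0)<\infty$.

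With all hypotheses in place, Theorem~\ref{thm:coord} yields $\E{}{\Phi(r^t) - \Phi^{\min}} \leq 2\Rc^2(r^0)/t$. It then remains to observe that $\Phi^{\min} = 0$: by Theorem~\ref{thm:dynamics-equilibrium}, the minimizer $r^*$ of $\Phi$ over $X(A)$ satisfies $\Phi(r^*) = 0$, and $\Phi\geq 0$ always (the infimal convolution is a minimum over splittings, so $\sum_i\risk_i(r_i)\geq(\wedge_i\risk_i)(\sum_i r_i)$ by definition). Hence $\E{}{\Phi(r^t)} = O(1/t)$, which is the claim. I would close by remarking that the hidden constant is $2\Rc^2(r^0)$, computable from the smoothness parameters, the sampling distribution $p$, and the spectral gap of the hypergraph Laplacian-type matrix $A$.

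The main obstacle I anticipate is the exactness-of-the-block-step identification: I need to be careful that the coordinate-descent update on line~\ref{alg:coord:update} — which is a \emph{single gradient step} with stepsize $1/L_i$ — genuinely equals the \emph{exact} block minimizer performed by $f_i$. These agree only because $\Phi$ restricted to $\im(A_i)$ is itself a risk-type (infimal-convolution) function whose minimizer is characterized by the first-order condition $\nabla\risk_j = \pi_{S_i}^*$ from Theorem~\ref{thm:risknets-trade-function}, and the convergence analysis of Theorem~\ref{thm:coord} only uses the \emph{upper bound} eq.~\eqref{eq:strong-smooth}, so it remains valid (in fact the progress is no worse) if we replace the gradient step by the exact minimizer — a standard observation for coordinate descent that I would state explicitly rather than re-derive. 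A secondary subtlety is bookkeeping the cash-transfer ambiguity so that "the sequence $r^t$" is well-defined; I would fix a canonical representative (e.g.\ require $\inner{\cash, r_i^t} = \inner{\cash, r_i^0}$ for $i\neq j^*$) and note that $\Phi$ is invariant to this choice, so the stated bound on $\E{}{\Phi(r^t)}$ is unambiguous.
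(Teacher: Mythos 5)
Your overall strategy is exactly the paper's: reduce to Theorem~\ref{thm:coord} by observing that its analysis only uses the descent guarantee $F(x^{t+1})\leq F\bigl(x^t-\tfrac{1}{L_i}A_i\aip\nabla F(x^t)\bigr)$, which the exact block minimization performed by a trade function trivially satisfies because the gradient-step point lies in the affine slice $\{x^t+A_iy\}$ over which $f_i$ minimizes. (Your first paragraph over-claims that the exact block minimizer \emph{coincides with} the gradient-step iterate --- false in general, since $\Phi$ restricted to $\im(A_i)$ is not quadratic --- but you correctly identify and repair this in your closing paragraph, and that repaired observation is precisely the one the paper makes.) Your treatment of $\Fmin=0$, of $\ones\otimes v\in\ker A_i$, and of the finiteness of $\Rc^2(x^0)$ also matches the paper, with the kernel-of-$A$ remark being a nice explicit addition.

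The one place you genuinely diverge is the existence of the smoothness constants $L_i$, and there your argument has a gap: you assert that differentiability of the $\risk_i$ makes $\nabla\Phi$ ``locally Lipschitz, hence Lipschitz on this compact region.'' A convex differentiable function has a \emph{continuous} gradient, but not necessarily a locally Lipschitz one (consider a convex function whose gradient behaves like $|x|^{1/2}$ near the origin), so the inequality \eqref{eq:strong-smooth} with a finite $L_i$ does not follow from differentiability alone. The paper instead derives smoothness by duality: differentiability of $\risk_i$ forces the penalty $\alpha_i=\risk_i^*(-\cdot)$ to be strictly convex, and the authors then invoke compactness of $\Pi$ to upgrade this to strong convexity of $\alpha_i$, which is equivalent to $\sigma_i$-smoothness of $\risk_i$; taking $L_i=\max_{j\in S_i}\sigma_j$ finishes it. (That step is itself not airtight --- strict convexity on a compact set does not in general imply strong convexity --- so the honest fix in either route is to \emph{assume} smoothness of the risk measures, or equivalently strong convexity of the penalties; but you should at least not present ``differentiable $\Rightarrow$ locally Lipschitz gradient'' as an implication.) Everything else in your write-up goes through as in the paper once finite $L_i$ are in hand.
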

\begin{proof}
  As our risks are differentiable, their conjugates are strictly convex functions, and hence strongly convex as $\Pi$ is compact.  We conclude that we have some $\{\sigma_i\in\reals\}_i$ such that $\risk_i$ is $\sigma_i$-smooth.  Now taking $L_i = \max_{j \in S_i} \sigma_j$, one can see that $F$ is $L_i$-$A_i$-smooth for all $i$ by eq.~\eqref{eq:strong-smooth}. \rfnote{I'm still not positive about this, but okay for now --- it's at least obvious that $F$ is $L$-smooth for some $L$.}

  Now note that Algorithm~\ref{alg:coord} carries no state aside from $x^t$, and hence the analysis depends on the drop in the objective function per step.  In other words, given $x^t$, the analysis holds for any algorithm satisfying $F(x^{t+1})  \leq F(x^t - A_i\aip \nabla F(x^t))$.  As trade dynamics actually minimize $F(x^t - A_i y)$ over all $y$, this property trivially holds, and thus the result of Theorem~\ref{thm:coord} applies.
\end{proof}

Amazingly, Theorem~\ref{thm:dynamics-rates} holds for \emph{all} connected trade dynamics, as they each minimize the surplus in whichever $S_i$ is chosen, and that is enough for the bounds from Theorem~\ref{thm:coord} to apply.  In fact, it is more than enough: as Theorem~\ref{thm:coord} holds even for gradient updates as in Algorithm~\ref{alg:coord}, the rates extend to less efficient trade dynamics, as long as the drop in surplus is at least as large as, or even within a constant factor of, the gradient update in Algorithm~\ref{alg:coord}.  This suggests that our convergence results are robust with respect to the model of rationality one employs; if agents have bounded rationality and cannot compute positions which would exactly minimize their risk, but instead approximate it within a constant factor of the gradient update, the rate remains $O(1/t)$.

\section{Application to Prediction Markets}
\label{sec:risknets-appl-pred-mark}

Our analysis was motivated in part by work that considered the equilibria of prediction markets with specific models of trader behavior: traders as risk minimizers~\citep{Hu:2014}; and traders with exponential utilities and beliefs from exponential families~\citep{Abernethy:2014}.
% and traders with potential-based demands~\citep{Frongillo:2012,Premachandra:2013}.
In both cases, the focus was on understanding the properties of the market at convergence, and questions concerning whether and how convergence happened were left as future work. 
We now explain how this earlier work can be seen as a special case of our analysis with an appropriate choice of network structure and dynamics. In doing so we also generalize several earlier results.

Following~\cite{abernethy2013efficient}, a cost function-based prediction market consists of a collection of $k$ outcome-dependent \emph{securities} $\{\phi(\cdot)_i\}_{i=1}^k$ that pay $\phi(\oo)_i$ dollars should outcome $\oo \in \outcomes$ occur. 
%The expected value of holding $r_i$ units of security $\phi(\cdot)_i$ for $i = 1, \ldots, k$ under the distribution $p \in \Delta$ is thus $\inner{p,X[r]} = \E{\oo\sim p}{\sum_i r_i\phi(\oo)_i}$, where $X[r]$ is the ``decompressed'' position from \S~\ref{sec:setting}.
A \emph{market maker} begins with an initial position $r^0 \in \pos$, the \emph{liability vector}, and a 
\emph{cost function} $C : \pos \to \R$.
A trader who wishes to purchase a bundle of securities $r \in \pos$ is
charged $\price(r) := C(r^t + r) - C(r^t)$ by the market maker which then updates its liability to $r^{t+1} = r^t + r$.  The desirable properties for cost functions are quite different from those of risk measures (e.g. information incorporating, arbitrage-free), yet as observed by \cite{Hu:2014}, the duality-based representation of cost functions  is essentially the same as the one for risk measures (compare Theorem~\ref{thm:cvx-risk-dual} and \cite[Theorem 5]{abernethy2013efficient}).  In essence then, cost functions \emph{are} risk measures, though because liability vectors measure losses and position vectors measure gains, we simply have $\risk_C(r) = C(-r)$.

In the prediction market of \cite{Hu:2014}, agents have risk measures $\risk_i$ and positions $r_i$. A trade of $r$ between such and agent a market maker with cost function $C$ and position $r^t$ makes the agent's new risk
$\risk_i(r_i + r - \price(r)\cdot\cash )$ since the market maker charges $\price(r)$ dollars for $r$.  Similarly, one can check that the market maker's risk remains constant for all trades of this form.

An agent minimizing its risk implements the \emph{trading function} (Definition~\ref{def:risknets-trade-func}) $f : (-r^t, r_i) \mapsto (-r^t-r, r_i+r)$ since
$\min_r \risk_i(r_i + r - (C(r^t + r) - C(r^t))\cdot\cash)
= \min_r \risk_i(r_i + r) + \risk_C(-r^t - r)$ by cash invariance of $\risk_i$, guaranteeing the surplus between the agent and market maker is zero.
Thus, one could think of agents in a risk-based prediction market as residing on a star graph, with the market maker in the center.  By Theorem~\ref{thm:dynamics-equilibrium}, any trade dynamic which includes every agent with positive probability will converge, and Theorem~\ref{thm:dynamics-rates} gives an $O(1/t)$ rate of convergence.

%Thus, the stochastic, potential-based markets in the analysis of \cite{Frongillo:2012} are precisely a star network with an edge dynamic, hence our convergence analysis is directly applicable.

%\raf{Here we should carefully reduce cost-function-based prediction markets to our setting (showing that $C(q) = \risk(-q)$), discuss briefly why trading with a market maker is just trading position vectors, and especially: why the market price is important, what it means (the expected value of the security payoffs), and that it reflects the market consensus.}

An important special case is where agents all share the same base risk measure $\risk$, but to different degrees $b_i$ which intuitively correspond to a level of \emph{risk affinity}.  Specifically, let $\risk_i(r) = b_i \risk(r/b_i)$, where a higher $b_i$ corresponds to a more risk-seeking agent.\footnote{Note however that agents are still risk-averse; only in the limit as $b\to\infty$ do the traders become risk-neutral.}  As we now show, the market equilibrium gives agent $i$ a share of the initial sum of positions $r^0$ proportional to his risk affinity, and the final ``consensus'' price of the market is simply that of a scaled version of $r^0$.
\begin{theorem}
  \label{thm:risknets-perspectives}
  Let $\risk$ be a given risk measure, and for each agent $i$ choose an initial position $r_i^0\in\pos$ and risk defined by $\risk_i(r_i) = b_i\risk(r_i/b_i)$ for some $b_i>0$.  Let $r^0 = \sum_i r_i^0$, and define $r\in\pos^N$ by $r_i = b_i r^0 / \textstyle\sum_j b_j$.  Then $r$ is the unique point up to zero-sum cash transfers such that $\Phi(r) = 0$.  Moreover, $r$ satisfies for all $i$,\vspace{-3pt}
  \begin{equation}
    \label{eq:risknets-2}
    \nabla \risk_i(r_i) = \nabla \risk\left(r^0/\textstyle\sum_j b_j\right)~.
  \end{equation}
\end{theorem}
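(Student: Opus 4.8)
The plan is to verify directly that the proposed allocation $r$ satisfies the first-order optimality conditions for the infimal convolution defining $(\wedge_i\rho_i)(r^0)$, and then to read off uniqueness and eq.~\eqref{eq:risknets-2} from Theorem~\ref{thm:dynamics-equilibrium}.

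First I would record how the scaled risks behave. Since $\rho_i(s) = b_i\rho(s/b_i)$ with $b_i>0$ is a perspective rescaling of $\rho$, the chain rule gives $\nabla\rho_i(s) = \nabla\rho(s/b_i)$ for every $s$, and $\rho_i$ inherits differentiability, cash invariance, convexity, and strict risk aversion from $\rho$ (the strict convexity inequality is merely rescaled by $b_i$); dually, $\alpha_i = b_i\alpha$. Evaluating the gradient at $r_i = b_i r^0/\sum_j b_j$ gives $r_i/b_i = r^0/\sum_j b_j$, hence $\nabla\rho_i(r_i) = \nabla\rho\!\left(r^0/\sum_j b_j\right)$ for all $i$ --- a value independent of $i$, which is exactly the claimed eq.~\eqref{eq:risknets-2} once we know $r$ is the market equilibrium.

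Next I would show $\Phi(r)=0$. Because $\sum_i r_i = \left(\sum_i b_i/\sum_j b_j\right) r^0 = r^0$, the allocation $r$ is feasible for $\min\{\sum_i\rho_i(s_i):\sum_i s_i = r^0\}$. The first-order condition for this convex program is the existence of a single $\pi$ with $\pi=\nabla\rho_i(s_i)$ for all $i$ (equivalently \cite[Prop.~X1.3.4.2]{hiriart1993grundlehren}), which the previous paragraph supplies with $\pi = \nabla\rho(r^0/\sum_j b_j)$; hence $r$ attains the infimum and $\Phi(r)=0$.

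For uniqueness up to zero-sum cash transfers, I would observe that $\Phi(r)=0$ forces $\Phi_S(r_S)=0$ for every subset $S$ (an optimal global split of $r^0$ restricts to an optimal split on any subgroup, else one could improve it), so $r$ is a fixed point of any connected trade dynamic; Theorem~\ref{thm:dynamics-equilibrium} then yields $r=r^*$ up to zero-sum cash transfers and identifies $\pi^*$ with $\nabla\rho(r^0/\sum_j b_j)$, finishing eq.~\eqref{eq:risknets-2}. Alternatively one can argue directly: any feasible $r'$ with $\Phi(r')=0$ has every $\nabla\rho_i(r'_i)$ equal to the unique dual minimizer $\pi^*$ of $\sum_i\alpha_i(\pi)-\inner{\pi,r^0}$ --- unique because $\sum_i\alpha_i = (\sum_j b_j)\,\alpha$ is strictly convex ($\alpha$ being strictly convex since $\rho$ is differentiable) --- and by strict risk aversion the set $\{s:\nabla\rho_i(s)=\pi^*\}$ is exactly a line in the $\cash$ direction, so $r'_i$ and $r_i$ differ by a cash amount, the amounts summing to zero. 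The only mildly delicate points --- that differentiability and strict risk aversion survive the perspective rescaling, and that a prescribed gradient pins down a position up to $\cash$ --- both follow from the convexity arguments already used in the proof of Theorem~\ref{thm:risknets-trade-function}, so I expect no genuine obstacle here.
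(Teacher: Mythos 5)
Your proposal is correct and follows essentially the same route as the paper: the paper's proof is exactly the chain-rule observation $\nabla\risk_i(r_i)=\nabla\risk(r_i/b_i)=\nabla\risk(r^0/\sum_j b_j)$ followed by an appeal to the machinery in the proof of Theorem~\ref{thm:risknets-trade-function} (matching gradients characterize $\Phi=0$, and strict risk aversion gives uniqueness up to zero-sum cash transfers). You simply unpack the details the paper compresses into that citation, so there is nothing to add.
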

\begin{proof}
  Note that $\nabla\risk_i(r_i) = \nabla\risk(r_i/b_i) = \nabla\risk(r^0/\sum_j b_j)$.  By the proof of Theorem~\ref{thm:risknets-trade-function}, $r$ must then satisfy $\Phi(r)=0$, and is the unique such point up to cash transfers.
\end{proof}

This result generalizes those in \S5 of \cite{Abernethy:2014}, where traders are assumed to maximize an expected utility of the form $U_b(w) = -b \exp(-w/b)$ under beliefs drawn from an exponential family with sufficient statistic given by the securities $\phi$.
The above result shows that exactly the same weighted distribution of positions at equilibrium occurs for \emph{any} family of risk-based agents, not just those derived from exponential utility via certainty equivalents
\citep{Ben-Tal:2007}.
In addition, this generalization shows that the agents need not have exponential family beliefs: their positions $r_i$ act as general natural parameters, and $1/b_i$ acts as a general measure of risk aversion.
Finally, this connection also means our analysis applies to their setting, addressing their future work on dynamics and convergence.

%\mdr{Relationship to Sebastien and Jake's exp. family agents: in the case of exp. utility, they show? (like Amos) connection to risk measures via optimised certainty equivalents (cf. Ben-Tal and Teboulle?).}

%%%
\rfnote{Added for arXiv version}
\paragraph{Remarks.}

In Section~\ref{sec:setting}, we observed that one could think of $\infc_i\risk_i$ as the ``market risk''; we now have enough context to understand the impact of this idea, by considering interactions \emph{between} markets.  Consider an arbitrary connected graph $G$ on $N$ ``meta-agents'', and for each of these agents, attach a collection of new ``child'' agents, calling the combined graph $H$; hence, each node in $H$ which came from $G$ is the center of its own star graph (see Figure~\ref{fig:meta-market}).  One can think of this setting as a network of market makers, each with disjoint trading populations.  By the associativity of the infimal convolution, and Theorem~\ref{thm:dynamics-equilibrium}, the equilibrium of the combined graph $H$ is the same as that of $G$, if we replace the risk of each meta-agent by the infimal convolution of the risks in its star graph.
Another instantiation of this idea would be to build a hierarchical market, or ``deep market'', corresponding to a massive tree, where each node serves as a market maker for the nodes below, but acts as a trader in the market above.  Again, under our model, the aggregation properties of such a hierarchical market would be exactly the same as the flattened market, where all agents interacted directly with the root market maker.

\begin{figure}
  \centering
  \begin{tikzpicture}[node distance=3cm,%
    meta/.style={outer sep=4pt,draw=black,shape=circle},%
    child/.style={outer sep=0pt,inner sep=1pt,draw=gray,shape=circle},%
    bend angle=30,->,>=stealth']% arrows
    \foreach \i in {1,...,6}
    {
      \node[meta] (\i) at (-120-\i*60:2) {$M_{\i}$};
      \foreach \j in {1,...,5}
      {
        \path (\i) --++(-120-\i*60+20*\j-60:2) node[child] (\i-\j) {$a^{\i}_{\j}$};
        \path (\i-\j) edge (\i);
      }
    }
    \path (1) edge (4) edge (5) edge (6);
    \path (2) edge (3) edge (4) edge (6);
    \path (3) edge (5) edge (6) edge (1);
    \path (4) edge (1) edge (2);
    \path (5) edge (6) edge (2) edge (4);
    \path (6) edge (2) edge (3);
  \end{tikzpicture}%
  \caption{A meta market.}
  \label{fig:meta-market}  
\end{figure}
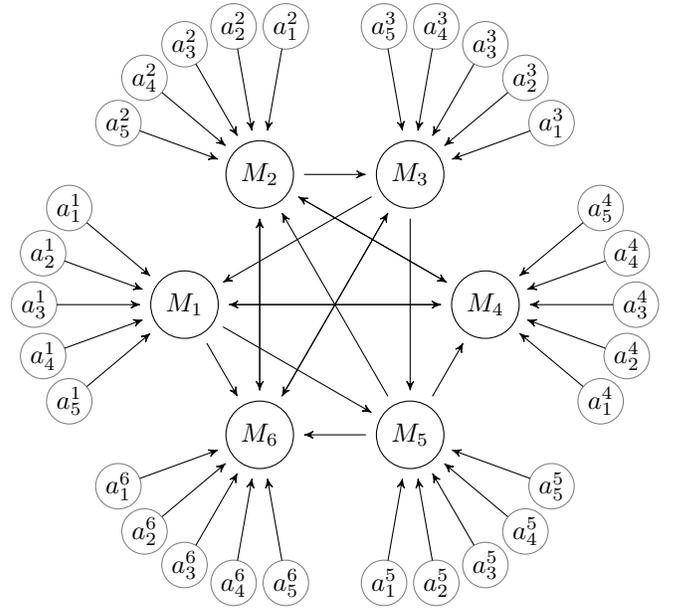

Another interpretation of the idea of meta-agents is through the lens of Coase's theory of the firm~\citep{Coase:1937}. 
Firms, according to Coase, arise when transaction costs make market coordination inefficient, allowing agents to coordinate without recourse to a price mechanism.
Our risk network model could be extended by introducing a fixed transaction cost for each trade along an edge in a network.
If, in the meta-agent example described above, the average surplus per agent in a star graph was comparable to the transaction cost then it is advantageous for the group to form a ``firm'' where agents agree to aggregate their positions and risk since trades along the edges of the star graph will deplete each agent's share of the surplus.
All agents then stand to gain from the meta-agent's interaction with the rest of the network without incurring the transaction costs required to redistribute those gains.
Given an initially unstructured collection of agents, one can imagine a network forming to offset transaction costs. 
Different groupings of agents into firms and the placement of edges between them would have an effect on how much of the global surplus would be taken by agents versus how much would be lost to transaction costs.
The specifics of this sort of model is left to future work.

% Really want to use the quote from Robertson (1932) via Coase (1937): firms as ``isalnds of conscious power in this ocean of unconscious co-operation like lumps of butter coagulating in a pail of buttermilk.''

\begin{figure}[!ht]
\centering
\vspace{-10pt}
\includegraphics[width=\columnwidth]{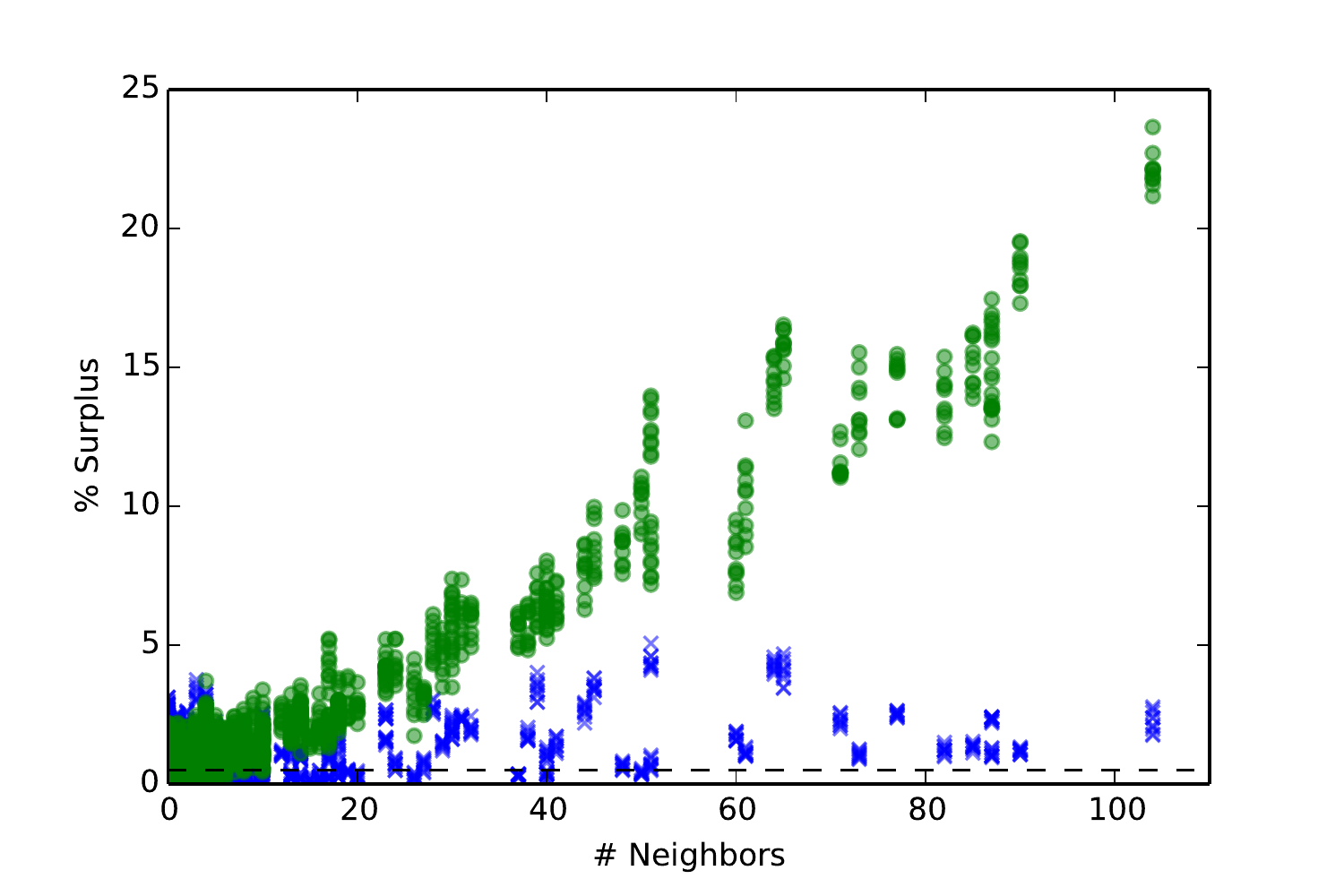}
\caption{Percentage of captured surplus per trader vs. number of trading neighbors for fair edge dynamic (green circles) and fair node dynamic (blue crosses). The dashed black line shows the fair distribution for 200 agents (0.5\%).\label{fig:drop_vs_degree}\vspace{-5pt} }
\end{figure}

\section{Experimental Results}

The theoretical results above show that the minimal value of a risk
network is independent of its topology and the dynamic used to achieve 
convergence.
However, due to the cash invariance of the traders' risk measures there are many
different trades that can reduce the network's surplus to zero.
If a fair dynamic is used to find the optimal trade amongst all traders in a network then convergence will occur in one step and, by the definition, the drop in risk for all the traders will be equal.
When traders are only allowed to trade with the neighbors in a network it is less obvious what effect the dynamics will have on the final distribution of risk at convergence.

To better understand the effect of network topology and dynamics on the
redistribution of risk, we implemented a simulation of network 
trading dynamics in Python. 
The package \texttt{networkx} is used to generate random scale-free networks of 200 agents.
We opted to study scale-free networks as these have properties similar
to naturally occurring networks (\eg, power law distributed vertex 
degrees).
Each agent in our simulated networks have entropic risk and uniformly randomly drawn positions from $[-50,50]^3$.
Node (respectively, edge) dynamics are implemented by choosing
a network vertex (resp., edge) uniformly at random and then finding
and executing the optimal trade between the neighbours of the
selected vertex (resp., endpoints of the edge).
Finding optimal trades between a collection of agents is implemented
using SciPy's \texttt{optimize} method.
In both experiments, 10 different networks (\ie, with different initial positions and structure) were each run 10 times with the edge dynamic and with the node dynamic, giving a total of 100 trials with 200 agents for each dynamic. Each trial was run for a maximum of 1000 steps of the dynamic.

Figure~\ref{fig:drop_vs_degree} shows how the number of trading neighbors an agent has affects the proportion of the surplus that agent takes once the network converges. For each dynamic, each of the $100\times 200$ points shows a single agent's degree and the percentage of the surplus it took at the end of each trial.
The results clearly show the strong influence of agent degree on its ability to minimize its risk under edge dynamics.
The effect is much weaker for node dynamics.
One possible explanation for this is that
% whenever agents interact the component of their net position parallel to the $\ones$ position can be taken as shared surplus.
% When only two agents with random position interact the probability of shared surplus is high and only split two ways but as the number of agents increase the net position concentrates around $0$ and the lower surplus is shared more ways.
high degree agents are selected less frequently under node dynamics, but also have to share the surplus with many more nodes.  We give further plots in the Appendix.
% and therefore have more opportunity to take larger shared surplus than high degree agents under node dynamics who also trade frequently but share in smaller surplus amongst more agents.

\section{Conclusions}

We have developed a framework to analyze arbitrary networks of risk-based agents, giving a very general analysis of convergence and rates, and addressing open issues in both \cite{Hu:2014} and \cite{Abernethy:2014}.  We view this as a foundation, which opens more questions than it answers.  For example, can we improve the asymptotic rates of convergence?  One potential technique would be to show that trading never leaves a bounded region, and carefully applying bounds for strongly convex functions (modulo the $\cash$ direction), which could give a rate as fast as $O(1/2^t)$.  An even broader set of questions has to do with the distribution of risk --- how does the network topology effect the outcome on the agent level?  As our experiments show, even local properties of the network may have a strong effect on the final distribution of risks, and understanding this relationship is a very interesting future direction.

\subsection*{Acknowledgments}
We would like to thank Matus Telgarsky for his generous help, as well as the lively discussions with, and helpful comments of, S\'ebastien Lahaie, Miro Dud\'ik, Jenn Wortman Vaughan, Yiling Chen, David Parkes, and Nageeb Ali.

\bibliographystyle{aaai}
\bibliography{risknets,diss}

\clearpage
\appendix

\section{Proof of Theorem~\ref{thm:coord}}

Before giving the proof, we note that the result in~\cite[Thm 11]{richtarik2014iteration} also holds for general Euclidean norms $\|\cdot\|_{(i)}$.  We leave out such extensions as ultimately the only change is in the update step (by leveraging e.g.~\cite[Lemma 10]{richtarik2014iteration} instead of our pseudoinverse update) and the form of the dual norm.  Lemma~\ref{lem:w-norm} below verifies that $\|\cdot\|_A$ is still a seminorm in these cases.

\label{sec:risknets-proof-coord}
\begin{proof}[Proof of Theorem~\ref{thm:coord}]
  To begin, suppose subspace $i$ is chosen at step $t$.  Letting $z = \frac{1}{L_i}\aip\nabla F(x^t)$ and $y = A_iz \in \im(A_i)$, the drop in the objective can be bounded using eq.~\eqref{eq:strong-smooth},
  \begin{equation}
    \label{eq:risknets-3}
    F(x^t) - F(x^t - y)
    \geq \inner{\nabla F(x^t),y}-\frac{L_i}2 \|y\|_2^2~.
  \end{equation}
  By properties of the Moore-Penrose pseudoinverse, we have
  \begin{align*}
    & \argmax_{z\in\reals^n} \inner{\nabla F(x^t),A_i z}-\frac{L_i}2 \|A_i z\|_2^2
    \\
    &= \argmin_{z\in\reals^n} \|A_i z - \nabla F(x^t)\|_2 \;= \aip z,
  \end{align*}
  but we also have that
  \begin{align*}
    & \max_{z\in\reals^n} \inner{\nabla F(x^t),A_i z}-\frac{L_i}2 \|A_i z\|_2^2
    \\
    &= \max_{z\in\reals^n} \inner{A_i^\tr \nabla F(x^t),z}-\frac{L_i}2 \inner{A_i^\tr A_i z,z}
    \\
    &= \frac 1 {2 L_i} \inner{(A_i^\tr A_i)^{-1} A_i^\tr \nabla F(x^t),A_i^\tr \nabla F(x^t)}
    \\
    &= \frac 1 {2 L_i}\|A_i\aip\nabla F(x^t)\|_2^2~,
  \end{align*}
  where we used the fact that $A_i$ is full rank and the identities $A^+ = (A_i^\tr A_i)^{-1}A_i^\tr$ and $A_i\aip A_i = A_i$.  Putting these together with eq.~\eqref{eq:risknets-3} and the fact that $x^{t+1} = x^t - y$ when $i$ is chosen, we have
  \begin{equation}
    \label{eq:risknets-4}
    F(x^t) - F(x^{t+1})
    \geq \frac 1 {2 L_i}\|A_i\aip\nabla F(x^t)\|_2^2~.
  \end{equation}
  Now looking at the expected drop in the objective, we have
  \begin{align*}
    F(x^t) - \E{}{F(x^{t+1})|x^t}
    &\geq \sum_{i=1}^m p_i \frac 1 {2 L_i}\|A_i\aip\nabla F(x^t)\|_2^2\\ & = \frac 1 2 \|\nabla F(x^t)\|_A^2~.
  \end{align*}
  
  To complete step~\ref{item:gap-bound} of our recipe and relate our per-round progress to the gap remaining, we observe that
  \begin{align*}
 F(x^t&) - \Fmin\\
    &\leq \max_{x^*\in\argmin_x F(x)} \inner{\nabla F(x^t),x^*-x^t} \\
    &\leq \max_{x^*\in\argmin_x F(x)} \|\nabla F(x^t)\|_A \;\|x^*-x^t\|_A^* \\
    &\leq \|\nabla F(x^t)\|_A \;\max_{x^*\in\argmin F} \; \max_{x:F(x)\leq F(x^0)} \|x^*-x\|_A^* \\
    &= \|\nabla F(x^t)\|_A \;\Rc(x^0)~,
  \end{align*}
  where we used convexity of $F$, the definition of the dual norm, the fact that $F(x^t)$ is non-increasing in $t$, and finally the definition of $\Rc$.  We now have $F(x^t) - \E{}{F(x^{t+1})|x^t} \geq (F(x^t)-\Fmin)/(2\Rc^2(x^0))$.  The remainder of the proof follows an argument of~\cite{necoara2014random} by analyzing $\Delta_t = \E{}{F(x^t)-\Fmin}$.  From the last inequality we have $\Delta_{t+1} \leq \Delta_t - \Delta_t^2 / 2\Rc^2(x^0)$, and since $\Delta_{t+1} \leq \Delta_t$, this gives $\Delta_t^{-1} \leq \Delta_{t+1}^{-1} - (2\Rc^2(x^0))^{-1}$.  Summing these inequalities gives the result.
\end{proof}

\label{sec:risknets-norm-lemma}
\begin{lemma}
  \label{lem:w-norm}
  Let seminorms $\{\|\cdot\|_{(i)}\}_{i=1}^m$ and positive weights $\{w_i\}_{i=1}^m$ be given, and define the function $\| \cdot \|_W : \reals^n\to\reals$ by
  \begin{equation}
    \label{eq:w-norm}
     \| x \|_W = \left(\sum_{i=1}^m w_i \|x\|_{(i)}^2\right)^{1/2}~.
  \end{equation}
  Then $\|\cdot\|_W$ is a seminorm.  It is additionally a norm if and only if $\| x \|_{(i)} = 0$ holds for all $i$ only when $x=0$.
\end{lemma}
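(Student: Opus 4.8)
The plan is to check the three seminorm axioms for $\|\cdot\|_W$ in turn and then read off the norm characterization. Nonnegativity is immediate: each summand $w_i\|x\|_{(i)}^2$ is nonnegative, so $\|x\|_W$ is a well-defined nonnegative real. Absolute homogeneity follows from that of each constituent seminorm: for $\lambda\in\R$ we have $\|\lambda x\|_{(i)}^2=\lambda^2\|x\|_{(i)}^2$, hence $\|\lambda x\|_W^2=\lambda^2\sum_i w_i\|x\|_{(i)}^2$, and taking square roots gives $\|\lambda x\|_W=|\lambda|\,\|x\|_W$.

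The triangle inequality is the only step with any content. Write $a_i:=\|x\|_{(i)}$ and $b_i:=\|y\|_{(i)}$. Applying the triangle inequality of $\|\cdot\|_{(i)}$ and squaring (both sides being nonnegative) gives $\|x+y\|_{(i)}^2\le(a_i+b_i)^2$; multiplying by $w_i>0$ and summing yields $\|x+y\|_W^2\le\sum_i w_i(a_i+b_i)^2$. Expanding the right-hand side and applying the Cauchy--Schwarz inequality for the weighted inner product $(u,v)\mapsto\sum_i w_i u_i v_i$ on $\R^m$ bounds the cross term by $\sum_i w_i a_i b_i\le\big(\sum_i w_i a_i^2\big)^{1/2}\big(\sum_i w_i b_i^2\big)^{1/2}=\|x\|_W\,\|y\|_W$, so that $\sum_i w_i(a_i+b_i)^2\le(\|x\|_W+\|y\|_W)^2$. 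Taking square roots gives $\|x+y\|_W\le\|x\|_W+\|y\|_W$. (Equivalently, this is just Minkowski's inequality for the weighted $\ell_2$ norm applied to the vectors $(a_i)_i$ and $(b_i)_i$ in $\R^m$.)

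Finally, since every $w_i$ is strictly positive and every term $w_i\|x\|_{(i)}^2$ is nonnegative, $\|x\|_W=0$ holds if and only if $\|x\|_{(i)}=0$ for all $i$. Therefore $\|\cdot\|_W$ is positive definite---and hence a norm---exactly when the only $x$ satisfying $\|x\|_{(i)}=0$ for all $i$ is $x=0$, which is the claimed condition. The only place any genuine estimate enters is the triangle inequality, and there the work reduces entirely to Cauchy--Schwarz in the finite-dimensional weighted Euclidean space $\R^m$; everything else is bookkeeping, so I expect the write-up to be short.
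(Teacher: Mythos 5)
Your proof is correct and follows essentially the same route as the paper's: the paper folds the weights into the seminorms, writes $\|x\|_W=\|\varphi(x)\|_2$ for $\varphi(x)_i=\|x\|_{(i)}$, and invokes monotonicity plus subadditivity of $\|\cdot\|_2$, which is exactly the Minkowski/Cauchy--Schwarz step you carry out explicitly. The norm characterization is handled identically in both.
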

\begin{proof}
  First, note that we may fold the weights into the seminorms, $\|x\|_{(i)}' := \|\sqrt{w_i}\,x\|_{(i)}$, so we can assume $w_i=1$ for all $i$ without loss of generality.
  Let $\varphi:\reals^n\to\reals^m$ be given by $\varphi(x)_i = \|x\|_{(i)}$.  Then $\|x\|_W = \|\varphi(x)\|_2$.
  \begin{itemize}
  \item Absolute homogeneity.  First observe that $\varphi(\alpha x) = |\alpha| \varphi(x)$ by homogeneity of the $\|\cdot\|_{(i)}$.  Then $\|\alpha x\|_W = \||\alpha| \varphi(x)\|_2 = |\alpha| \|x\|_W$.
  \item Subadditivity.  We first recall the fact that if $x_i \geq y_i$ for all $i$, then $\|x\|_2 \geq \|y\|_2$.  Combining this fact with subadditivity of the $\|\cdot\|_{(i)}$ and then of $\|\cdot\|_2$, we have
    \begin{align*}
      \|x+y\|_W &= \|\varphi(x+y)\|_2 \leq \|\varphi(x)+\varphi(y)\|_2\\ &\leq \|\varphi(x)\|_2 + \|\varphi(y)\|_2 = \|x\|_W + \|y\|_W.
    \end{align*}
  \end{itemize}

  We now show the norm condition.  First, we assume $\| x \|_{(i)} = 0$ for all $i$ implies $x=0$; we will show Separation.  We clearly have $\|0\|_W=0$.  By the above, $\|x\|_W = 0$ implies $\|\varphi(x)\|_2 = 0$, yielding $\|x\|_{(i)}=0$ for all $i$ by definiteness of $\|\cdot\|_2$, and hence $x = 0$ by assumption.

  For the converse, observe that any $x\neq 0$ with $\|x\|_{(i)} = 0$ for all $i$ would imply a violation of definiteness, as $\varphi(x) = 0$ and hence $\|x\|_W = \|\varphi(x)\|_2 = \|0\|_2 = 0$.
\end{proof}

\section{Coordinate descent bounds}
\label{sec:risknets-nesterov-bounds}

We show here that by formulating a general algorithm to perform coordinate descent steps along arbitrary subspaces, we can recover existing algorithms via reduction to the computation of a matrix pseudoinverse.  As we will see, for the special case of edge updates in a graph, we can leverage existing results in spectral graph theory to analyze new graphs currently not considered in the literature.

Let us first consider an optimization problem on the complete graph, which picks an edge $(i,j)$ uniformly at random and optimizes in coordinates $i$ and $j$ under the constraint that $x_i^{t+1} + x_j^{t+1} = x_i^{t} + x_j^{t}$.  This corresponds to subspaces $A_{(i,j)} = e_i-e_j$, where $e_i$ is the $i$th standard unit vector, making $A_{(i,j)} A^+_{(i,j)} = \frac 1 2 (e_i-e_j)(e_i-e_j)^\tr$.  Assuming a global smoothness constant $L$, one can calculate
\begin{align*}
  A &= \tfrac 2 {L N (N-1)} \sum_{(i,j)} A_{(i,j)} A_{(i,j)}^+ = \tfrac 1 {L  (N-1)} \left(I - \tfrac 1 N \ones\right)~,\\
  A^+ &= L(N-1) (I - \tfrac 1 N \ones)~,
\end{align*}
where $\ones$ is the $N\times N$ all-ones matrix.  Now as $\im(A)=\ker(\ones)$, this gives
\begin{align}
  \|x\|_A^{*\;2} &= %L(N-1)\inner{(1-\tfrac 1 N \ones) x,x}
  L(N-1)\|x\|_2^2~.\label{eq:risknets-complete-graph}
\end{align}
Similarly, the complete rank-$K$ hypergraph gives $\|x\|_A^{*\;2} = L\frac{N-1}{K-1}\|x\|_2^2$. (Compare to eq. (3.10) and the top of p.21 of~\cite{necoara2014random}.)  Letting $\Cc_0 = 4L\max_{x\in X(A):F(x)\leq F(x^0)} \; \max_{x^* \in \Farg} \|x-x^*\|_2^2$, which is independent of the (hyper)graph as long as it is connected, we thus have a convergence rate of $\tfrac{N-1}{2}\, \Cc_0\, \frac 1 t$ for the complete graph, and more generally $\tfrac{N-1}{2(K-1)}\, \Cc_0\, \frac 1 t$ for the complete $K$-graph.  Henceforth, we will consider the coefficient in front of $\Cc_0$ to be the convergence rate, as all other parameters

The above matrix $A$ is a scaled version of what is known as the \emph{graph Laplacian} matrix; given a graph $G$ with adjacency matrix $A(G)$ and degree matrix $D(G)$ with the degrees of each vertex on the diagonal, the Laplacian is the matrix
\begin{equation}
  \label{eq:risknets-graph-laplacian}
  \Lc = \Lc(G) := D(G) - A(G)~.
\end{equation}
One can check that indeed, $\Lc = 2 \sum_{(i,j)\in E(G)} A_{(i,j)}A_{(i,j)}^+$, meaning $A = \tfrac p {2 L} \Lc$, where $p = 1/|E(G)|$ is the uniform probability on edges.

The graph Laplacian is an extremely well-studied object in spectral graph theory and many other domains, and we can use existing results to establish bounds for more interesting graphs.  In particular, we will be interested in the second smallest eigenvalue of $\Lc$, $\lambda_2(G)$; it is easy to see that the smallest eigenvalue is $\lambda_1(G) = 0$ with eigenvector $\ones$.  The reason for this focus is in the combination of the following two facts: (1) the norm $\inner{Bx,x}^{1/2}$ for symmetric $B$ can be bounded by the maximum eigenvalue of $B$, and (2) the maximum eigenvalue of $B^+$ is equal to the inverse of the smallest nonzero eigenvalue of $B$, provided again that $B$ is symmetric.\footnote{These facts follow from the operator norm and singular-value decomposition for the pseudoinverse, respectively, together with the fact that singular values are eigenvalues for symmetric matrices.}  In particular, the smallest nonzero eigenvalue of $A$ is simply $\tfrac p {2L} \lambda_2(G)$.  Hence, for \emph{any} connected graph $G$, we have
\begin{equation}
  \label{eq:risknets-connectivity-norm-bound}
  \|x\|_A^{*\; 2} \leq 2 L \frac {|E(G)|} {\lambda_2(G)} \|x\|_2^2~.
\end{equation}
Of course, by the above definition of $\Cc_0$ and Theorem~\ref{thm:coord}, this yields the result
\begin{equation}
  \label{eq:risknets-6}
  \E{}{F(x^t) - \Fmin} \leq \frac {|E(G)|} {\lambda_2(G)} \Cc_0 \frac 1 t~,
\end{equation}
showing us how tightly related this eigenvalue is to rate of convergence of Algorithm~\ref{alg:coord}.

As it happens, this second-smallest eigenvalue $\lambda_2(G)$ is called the \emph{algebraic connectivity} of $G$, and is itself thoroughly studied in spectral and algebraic graph theory.  For example, it is known (and easy to check) that $\lambda_2(K_N) = N$, where $K_N$ denotes the complete graph; this together with $|E(K_N)| = N(N-1)/2$ immediately gives eq.~\eqref{eq:risknets-complete-graph}.  In~\cite{de_abreu2007old}, algebraic connectivities are also given for the path on $N$ vertices $P_N$, the cycle $C_N$, the bipartite complete graph $K_{M,K}$ for $K<M$, and the $K$-dimensional cube $B_K$.  Putting these eigenvalues together yields Table~\ref{tab:graphs}.

\begin{table}
  \centering
  % \begin{tabular}{lcccc}
  %   \toprule
  %   Graph & $P_N$ & $C_N$ & $K_{M,K}$ & $B_K$
  %   \\\midrule
  %   $|V(G)|$
  %   & $N$ & $N$ & $M+K$ & $2^K$
  %   \\
  %   $|E(G)|$
  %   & ${N-1}$
  %   & ${N}$
  %   & ${MK}$
  %   & ${K 2^{K-1}}$
  %   \\
  %   ${\lambda_2(G)}$
  %   & ${2(1\!-\!\cos \!\tfrac \pi N)}$
  %   & ${2(1\!-\!\cos \!\tfrac {2\pi}{N})}$
  %   & ${K}$
  %   & ${2}$
  %   \\\bottomrule
  % \end{tabular}
  \begin{tabular}{lcccc}
    \toprule
    Graph &     $|V(G)|$   &  $|E(G)|$  &    ${\lambda_2(G)}$
    \\\midrule
    $K_N$ & $N$ & $N(N-1)/2$ & $N$
    \\
    $P_N$ & $N$ & $N-1$ & ${2(1\!-\!\cos \!\tfrac \pi N)}$
    \\
    $C_N$ & $N$ & $N$ & ${2(1\!-\!\cos \!\tfrac {2\pi}{N})}$
    \\
    $K_{M,K}$ & $M+K$ & $MK$  & ${K}$
    \\
    $B_K$ & $2^K$ & $K2^{K-1}$ & ${2}$
    \\\bottomrule
  \end{tabular}

\caption{Vertices, edges, and algebraic connectivities for common graphs.}
\label{tab:graphs}
\end{table}
Using the values in Table~\ref{tab:graphs}, we can directly compare the theoretical convergence rates for different graphs.\footnote{Note that of course these are just upper bounds on the true convergence rates.}  For example, the star graph $K_{N-1,1}$ is the natural network for prediction markets in Section~\ref{sec:risknets-appl-pred-mark}; plugging in the values from Table~\ref{tab:graphs} into eq.~\eqref{eq:risknets-connectivity-norm-bound}, we see that, despite its sparsity, the convergence rate for the star graph $(N-1)(1)/(1) = (N-1)$ is within a factor of 2 of the rate for complete graph.  The path and cycle fare much worse, yielding roughly $N/2(N^{-2}/2) = N^3$ as $N$ becomes large (applying the Taylor expansion and ignoring $\pi$ terms).  Finally, an interesting result due to~\citet{mohar1991laplacian} says that for any connected graph on $N$ vertices, we have $\lambda_2(G) \geq 4/(N\mathrm{diam}(G))$ where $\mathrm{diam}(G)$ is the diameter of $G$.  Hence for any graph we certainly have
\begin{equation}
  \label{eq:risknets-5}
  \E{}{F(x^t) - \Fmin} \leq \frac {N\, |E(G)|\, \mathrm{diam}(G)} {4} \, \Cc_0 \frac 1 t~,
\end{equation}
which is a useful bound for sparse graphs of small diameter.

As we have demonstrated above, our general approach to choosing coordinate subspaces combines very naturally with the literature in algebraic and spectral graph theory, yielding a reasonably rich understanding of the convergence rates for various choices of network structure.  In particular, this approach can be used to analyze algorithms for specific networks without needing to start from scratch.  It would be of interest to compute similar bounds for general classes of hypergraphs, to better understand the trade-offs between the convergence rate and the size/connectivity of coordinate subspaces.

\raf{Possibly add later:

For general hypergraphs $\Sc$, we may define the degree matrix $D(\Sc)$ to be the diagonal matrix with $D(\Sc)_{ii} = \#\{S\in\Sc:i\in S\}$, and the ``adjacency'' matrix to be $A(\Sc)_{ij} = \sum_{S\in\Sc:i,j\in S} 1/|S|$.  Then we have $A = \tfrac p L (D(\Sc) - A(\Sc))$.  This follows from observing that for subset $S$, we have $A_S A_S^+ = I_S - \tfrac 1 {|S|} \ones_S$, and counting as we sum.  Taking the complete $K$-graph yields $D(\Sc) = \binom{N-1}{K-1}I$ and $A(\Sc)_{ij} = \tfrac 1 K \binom{N-2}{K-2} = \tfrac{K-1}{K(N-1)}\binom{N-1}{K-1}$ for $i\neq j$ and $A(\Sc)_{ii} = \tfrac 1 K \binom{N-1}{K-1}$.  Putting this all together gives $A = \tfrac{1}{L \binom{N}{K}} \tfrac N K \tfrac {K-1}{N-1} \binom{N-1}{K-1} (I - \tfrac 1 N \ones)) = \tfrac {N-1}{L(K-1)} (I-\tfrac 1 N \ones)$.}

\begin{figure}[!h]
\includegraphics[width=\columnwidth]{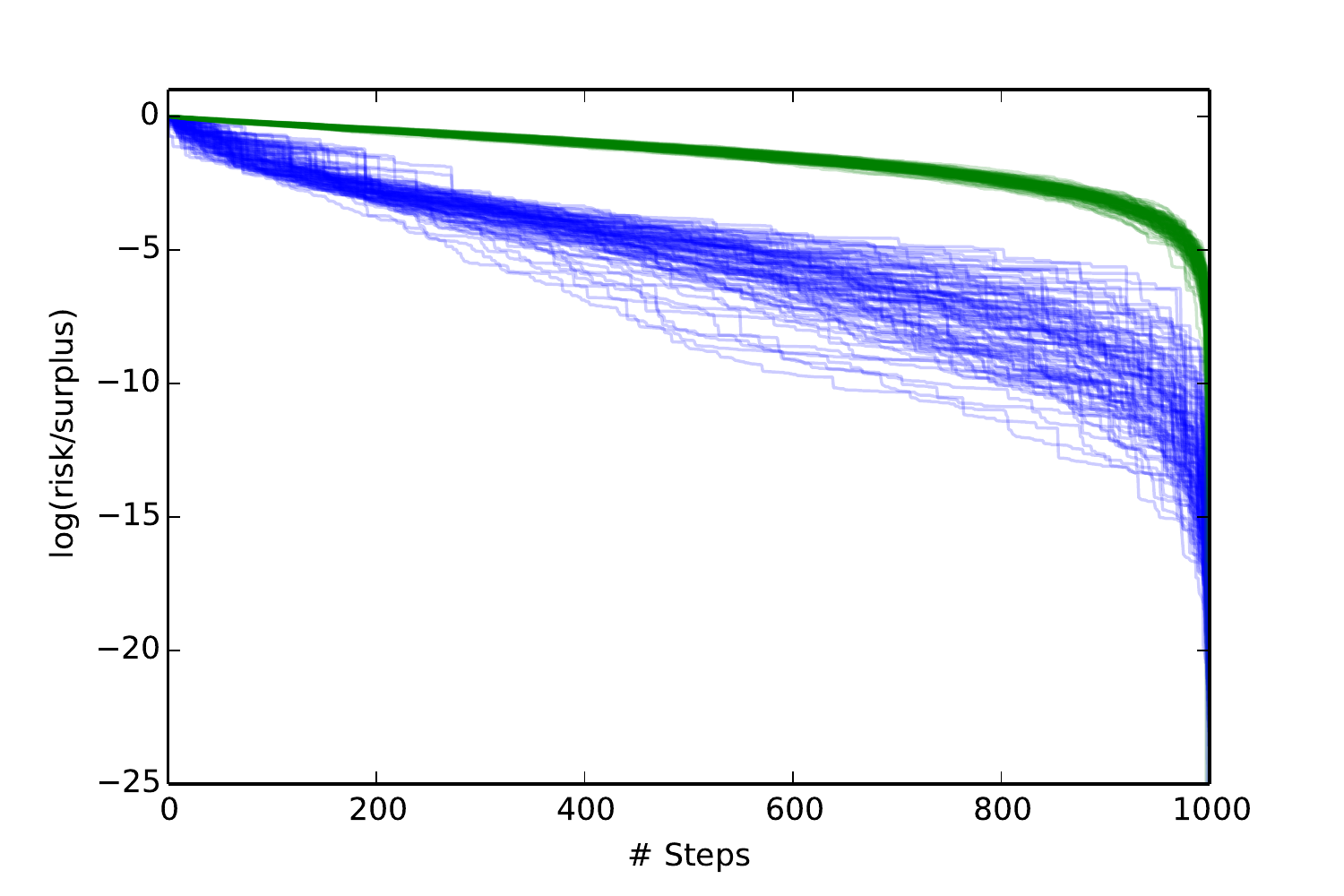}
\caption{Log of fraction of captured network surplus vs. number of trading interactions for fair edge dynamic (green lines; top) and fair node dynamic (blue lines; below). \label{fig:risk_vs_steps}}
\end{figure}

\end{document}